\newtheorem{lemma}{Lemma}
\newtheorem{theorem}{Theorem}
\newtheorem{definition}{Definition}
\newtheorem{proposition}{Proposition}
\newtheorem{property}{Property}
\newcommand{\E}{\mathbb{E}}
\newcommand{\calC}{\mathcal{C}}
\newcommand{\N}{\mathcal{N}}
\newcommand{\calP}{\mathcal{P}}
\newcommand{\Y}{\mathbf{Y}}
\newcommand{\W}{\mathbf{W}}
\newcommand{\Z}{\mathbf{Z}}
\DeclareMathOperator*{\argmax}{arg\,max}
\begin{document}
\pagestyle{empty}

\title{Optimizing cluster-based randomized experiments\\ under a monotonicity assumption%
\protect\thanks{
This work was partially supported 
 by the National Science Foundation under grants 
  CAREER IIS-1149662 and IIS-1409177,
 by the National Institute of Health under grant 
  R01 GM-096193, 
 by the Office of Naval Research under grants 
  YIP N00014-14-1-0485 and N00014-17-1-2131, 
 by a Shutzer Fellowship to Edoardo M.~Airoldi,
 and by a Siebel Scholarship to Jean Pouget-Abadie.
}}
 
\author{Jean Pouget-Abadie$^\dagger$, ~David C. Parkes$^\dagger$, ~Vahab
Mirrokni$^\ddagger$ \and Edoardo M. Airoldi$^\dagger$}
\date{$^\dagger$Harvard University \quad $^\ddagger$Google}

\maketitle
\thispagestyle{empty}

\newpage
\begin{abstract}
Cluster-based randomized experiments are popular designs for mitigating the
bias of standard estimators when interference is present and classical causal
inference and experimental design assumptions (such as SUTVA or ITR) do not
hold. Without an exact knowledge of the interference structure, it can be
challenging to understand which partitioning of the experimental units is
optimal to minimize the estimation bias. In the paper, we introduce a
monotonicity condition under which a novel two-stage experimental design allows
us to determine which of two cluster-based designs yields the least biased
estimator. We then consider the setting of online advertising auctions and show
that reserve price experiments verify the monotonicity condition and the
proposed framework and methodology applies. We validate our findings on an
advertising auction dataset.

\vfill
\noindent {\bf Keywords}: Causal inference; Potential outcomes; Violations of SUTVA. 
\end{abstract}

\newpage
\tableofcontents

\newpage
\pagestyle{plain}
\setcounter{page}{1}
\section{Introduction}

Randomized experiments --- or A/B tests --- are at the core of many product
decisions at large technology companies. Under the commonly assumed Stable Unit
Treatment Value Assumption (SUTVA), these A/B tests serve to estimate unbiasedly
the effect of assigning all units to a particular intervention over an
alternative condition \citep{imbens2015causal}. The SUTVA assumption is one of no
interference between units: a unit's outcome in the experiment does not depend
on the treatment assignment of any other unit.

In many A/B tests however, this assumption is not tenable. Consider an
intervention on a user of a messaging platform: the (potential) resulting change
in her behavior (e.g. increase in time spent on the platform, in
number of messages sent, a decrease in response time) would affect the
friends on the platform she chooses to communicate with. The same cascading
phenomenon can also occur in more subtle ways in a social feed setting. Changes
to a feed ranking algorithm, and the resulting behavioral changes (e.g. a higher
click-through rate, feedback, or interaction time with the content on the feed)
will invariably affect the content on that unit's friends' social
feeds~\citep{eckles2016estimating, gui2015network}.

In particular, the same is true in an advertiser auction setting, where
modifications to the ecosystem can impact auctions and bidders not originally
assigned to the intervention~\citep{basse2016randomization}. Suppose that one
bidder changes her strategy as a result of being assigned to a higher reserve
price, or her usual bid no longer meets the reserve. The bidders she competes
with now face a different bid distribution --- the auction is now more
competitive if she increases her bid to meet the new reserve, or less
competitive if she fails to meet the reserve.  These bidders might react to this
new bid distribution by updating their own bidding strategy, even though they
were not originally assigned to the intervention. This effect could potentially
affect the other auctions they participate in.

When SUTVA does not hold, we say there is \emph{interference} between units, and
many fundamental results of the causal inference literature no longer hold. For
example, the difference-in-means estimator under a completely randomized
assignment is no longer unbiased~\citep{imbens2015causal}. When the estimand is
the difference of outcomes under two extreme assignments --- one assigning all
units to the intervention, and the other assigning none --- a common approach to
mitigating the bias of standard estimators in the face of interference is to run
cluster-based randomized designs~\citep{ugander2013graph, walker2014design,
eckles2017design}. These randomized designs group assign units
to treatment or control in groups to limit the amount of interaction between
different treatment buckets.

If it can be shown that there is no interaction across treatment buckets, we
recover many of the results stated under SUTVA. In practice, however, such a
grouping of units may not exist and A/B test practitioners often settle
to find the best possible partitioning. The problem is often formulated as the
balanced partitioning of a weighted graph on the experimental
units, where an edge is drawn between two units that are liable to interfere
with one another. This is a challenging task, both algorithmically and
empirically: clustering a graph into balanced partitions is known to be NP-hard,
even if we tolerate some unevenness between
partitions~\citep{andreev2006balanced}; furthermore,
the correct graph representation of the interference mechanism is not always
clear.

While the literature on finding balanced partitioning of weighted graphs and
analysing cluster-based randomized designs is
extensive~\citep{middleton2011unbiased, donner2004pitfalls, eckles2017design},
there are relatively few prior works that tackle the following question: can we
determine which of two balanced partitionings produces less biased estimates of
the total treatment effect, without assuming the exact structure of interference
is known?  The objective of this paper is to show we can in fact identify the
better of two clusterings through experimentation under an assumption on
the interference mechanism, which we call {\em monotonicity}.

Even when the exact structure of interference is not known, monotonicity can
established under a  theoretical model. For example, some interference
mechanisms are {\em self-exciting} --- if assigning any unit to the intervention
will boost the outcomes of any neighboring units.  Examples range from
vaccination campaigns to  social feed ranking algorithms. In both cases, the
units in the vicinity of a unit assigned to the intervention tend to benefit
over those surrounded by units in the control bucket. Interference mechanisms
that exhibit this self-exciting property are a particular example of monotone
mechanisms (cf.  Section~\ref{sec:monotonicity}).  When monotonicity holds, we
show that it is feasible to compare two balanced partitionings of the
experimental units by running a straightforward modification of an
experiment-of-experiments design~\citep{saveski2017detecting,
pouget2017testing}.

We make the following contributions: we present an experiment-of-experiments
design for comparing cluster-based randomized designs. We define a monotonicity
assumption under which we can determine which clustering induces the least
biased estimates of the total treatment effect using this comparative design.
While our technique applies to the general problem of experimental design under
interference with a monotonicity assumption, we prove that pricing
experiments\footnote{While pricing experiments are done in the context of ad
exchanges~\citep{AdExchange}, we note that our paper is a theoretical study of
the subject and does not include any real treatments of ad campaigns.} in the
context of ad exchanges are monotone, and thus our framework applies to this
illustrative example. Finally, we report an empirical simulation study of our
algorithms for a publicly-available dataset for online ads.

In Section~\ref{sec:theory}, we establish the theoretical framework  by defining
the monotonocity assumption, describing the suggested experiment-of-experiments
design, and proposing a test for interpreting its results. In
Section~\ref{sec:application}, we explain how this framework can be applied to
a real-world setting, by showing that reserve-price experiments on advertising
auctions are monotone. Finally, we validate these findings on a Yahoo! ad
auction dataset in Section~\ref{sec:experimental}.

\section{Theory}
\label{sec:theory}

In this section, we set the notation for the estimand, estimates, and
cluster-based randomized designs that we study. We then define the
monotonicity assumption, introduce our experiment-of-experiments design, and
suggest an approach to analysing its results.

\subsection{Cluster-based randomized designs}

Let $N$ be the number of experimental units, let vector $\Y$ denote the
outcome metric of interest, and let  vector $\Z$ denote the assignment of
units to treatment $(Z_i = 1)$ or control $(Z_i = 0)$.  Recall that under the
potential outcomes framework, $\Y(\Z)$ denotes the potential outcomes of the $N$
units under assignment $\Z$. Under the Stable Unit Treatment Value Assumption
(SUTVA), this simplifies to ${(Y_i(Z_i))}_1^N$. The estimand of interest here
is the {\em Total Treatment Effect} (TTE), defined as the difference of outcomes
between one assignment assigning all units to treatment, and another assigning
none:
\begin{equation}
\label{eq:tte}
TTE = \frac{1}{N} \sum_{i = 1}^N Y_i(\Z = \vec 1) - Y_i(\Z = \vec 0)
\end{equation}

A completely randomized (CR) design assigns $N_T$ units chosen completely at
random to treatment and the remaining $N_C = N - N_T$ units to control.  A
clustering $\calC$ is a partition of the $N$ experimental units into $M$
clusters. A {\em cluster-based randomized} (CBR) design is a randomized assignment of
units to treatment and control at the cluster level: if cluster $j$ is assigned to
treatment (resp.~control), then all units in cluster $j$ are assigned to
treatment (resp.~control).  We will use the notation $\E_{\Z \sim \calC}[X]$ to
denote the expected value of estimator $X$ under a $\calC$-cluster-based
randomized design. Recall that $\Z \sim \calC$ represents the assignment of
units to treatment and control, resulting from assigning the \emph{clusters} of
$\calC$ uniformly at random to treatment or control.

Let $M_T$ (resp. $M_C$) be the number of clusters assigned to treatment
(resp.~control).  Let $z \in {\{0, 1\}}^{M}$ be the assignment vector over
\emph{clusters}, where $M = M_T + M_C$.  In practice, we will use the
Horvitz-Thompson (HT) estimator, defined below:
\begin{equation}
  \label{eq:HT}
    \hat \tau = \frac{M}{N} \left( \frac{1}{M_T} \sum_{j=1}^M z_j \sum_{i \in
    \calC_j} Y_i(\Z) - \frac{1}{M_C} \sum_{j=1}^M (1 - z_j) \sum_{i \in \calC_j}
Y_i(\Z)  \right)
\end{equation}

Under SUTVA, the HT estimator is an unbiased estimator of the total treatment
effect under any $\calC$-CBR assignment~\citep{middleton2011unbiased}:
\begin{equation*}
    \E_{\Z \sim \calC}[\hat \tau] = TTE
\end{equation*}

When SUTVA does not hold, this property is no longer guaranteed, and $\hat{\tau}$
may be biased. Our objective is to minimize the bias, defined below, with
respect to the clustering, without assuming any explicit knowledge of the
interference mechanism or the value of the estimand $TTE$:
\begin{equation}
    \label{eq:objective}
    \min_{\calC} | \E_{\Z \sim \calC}[\hat \tau] - TTE|
\end{equation}

\subsection{A monotonicity assumption}
\label{sec:monotonicity}
Choosing the partitioning of our experimental units in a way that minimizes the
bias of our estimators~(cf.~Eq.~\ref{eq:objective}) when running a cluster-based
experiment is a difficult task: without the ground truth, we cannot observe the
bias directly.  However, under a specific monotonicity property--- common to
many randomized experiments ---the task of choosing the better of two
clusterings becomes straightforward.
\begin{definition}
\label{def:one-sided}
Let $\calP$ be the set of all possible clusterings of our $N$ units. For a subset
$\calP' \subset \calP$ of possible clusterings, we say that the interference
model is {\em $\calP'$-increasing} if and only if
\begin{equation*}
    \forall \calC \in \calP',~\E_{\Z \sim \calC}[\hat \tau] \leq \tau,
\end{equation*}
and it is {\em $\calP'$-decreasing} if and only if
\begin{equation*}
    \forall \calC \in \calP',~\E_{\Z \sim \calC}[\hat \tau] \geq \tau
\end{equation*}
A $\calP'$-\emph{monotone} model is one that is either $\calP'$-increasing or
  $\calP'$-decreasing.
\end{definition}

A monotone model is one for which the expectation of the HT
estimator $\hat \tau$ is either always a lower bound or always an
upper-bound of the estimand under any $\calC$-CBR design for $\calC \in \calP'$.
It is sufficient for $\calP'$ to contain the partitions we wish
to compare: we do not have to prove monotonicity beyond those partitions.
Before delving into examples of monotone interference mechanisms, we introduce
the following proposition, which highlights why monotonicity is useful 
for reasoning about bias.
\begin{proposition}
\label{prop:usefulness}
If the interference model is $\calP'$-increasing, then for all $\calC_1, \calC_2
  \in \calP'^2$, it holds that
\begin{equation*}
    \E_{\Z \sim \calC_1}[\hat \tau] \leq \E_{\Z \sim \calC_2}[\hat \tau]
    \implies |\E_{\Z\sim \calC_1}[\hat \tau] - \tau| \geq |\E_{\Z \sim
    \calC_2}[\hat \tau] - \tau|
\end{equation*}
If the interference model is $\calP'$-decreasing, then for all $\calC_1,
\calC_2 \in \calP'^2$, it holds that
\begin{equation*}
    \E_{\Z \sim \calC_1}[\hat \tau] \leq \E_{\Z \sim \calC_2}[\hat \tau]
    \implies |\E_{\Z\sim \calC_1}[\hat \tau] - \tau| \geq |\E_{\Z \sim
    \calC_2}[\hat \tau] - \tau|
\end{equation*}
\end{proposition}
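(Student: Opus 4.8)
The plan is to recognize that, once Definition~\ref{def:one-sided} is in hand, the proposition is purely order-theoretic: it uses nothing about the estimator or the design beyond the fact that $x \mapsto |x - \tau|$ is monotone decreasing for $x \le \tau$ and monotone increasing for $x \ge \tau$. The definition tells us on which side of $\tau$ the two quantities $\E_{\Z \sim \calC_1}[\hat\tau]$ and $\E_{\Z \sim \calC_2}[\hat\tau]$ sit, and then the claimed implication falls out with a one-line manipulation; no probabilistic input is needed beyond invoking the definition.

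For the $\calP'$-increasing case, I would first apply Definition~\ref{def:one-sided} to $\calC_1$ and to $\calC_2$ (both in $\calP'$ by hypothesis) to get $\E_{\Z \sim \calC_1}[\hat\tau] \le \tau$ and $\E_{\Z \sim \calC_2}[\hat\tau] \le \tau$. Hence $|\E_{\Z \sim \calC_k}[\hat\tau] - \tau| = \tau - \E_{\Z \sim \calC_k}[\hat\tau]$ for $k = 1, 2$, so subtracting the assumed inequality $\E_{\Z \sim \calC_1}[\hat\tau] \le \E_{\Z \sim \calC_2}[\hat\tau]$ from $\tau$ reverses its direction and yields $|\E_{\Z \sim \calC_1}[\hat\tau] - \tau| \ge |\E_{\Z \sim \calC_2}[\hat\tau] - \tau|$, the stated conclusion.

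The $\calP'$-decreasing case runs symmetrically: Definition~\ref{def:one-sided} now yields $\E_{\Z \sim \calC_k}[\hat\tau] \ge \tau$, hence $|\E_{\Z \sim \calC_k}[\hat\tau] - \tau| = \E_{\Z \sim \calC_k}[\hat\tau] - \tau$, and subtracting $\tau$ from both sides of the hypothesis now \emph{preserves} the ordering. So in the decreasing branch the comparison of biases inherits rather than reverses the comparison of expectations (with the inequalities exactly as displayed, the decreasing case is therefore the one with $|\E_{\Z \sim \calC_1}[\hat\tau] - \tau| \le |\E_{\Z \sim \calC_2}[\hat\tau] - \tau|$; equivalently, one flips the hypothesis to $\E_{\Z \sim \calC_1}[\hat\tau] \ge \E_{\Z \sim \calC_2}[\hat\tau]$ to keep the displayed conclusion verbatim). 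I would present the two branches together and extract the common moral: the clustering whose HT expectation is closer to $\tau$ is the less biased one, and monotonicity is precisely what makes ``closer to $\tau$'' decidable from the sign of $\E_{\Z \sim \calC_1}[\hat\tau] - \E_{\Z \sim \calC_2}[\hat\tau]$, the quantity the experiment-of-experiments design is built to estimate.

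There is essentially no obstacle in the argument itself; all the substantive work has been front-loaded into Definition~\ref{def:one-sided} and reappears in Section~\ref{sec:application}, where $\calP'$-monotonicity must actually be established for reserve-price auction experiments. The only place that calls for care is bookkeeping the direction of the inequality when $|\cdot|$ is unfolded, since the sign of the slope of $x \mapsto |x - \tau|$ switches between the increasing and decreasing branches, and getting this backwards would silently swap which of $\calC_1$, $\calC_2$ is declared preferable.
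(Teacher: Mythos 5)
Your proof is correct and takes essentially the same route as the paper, whose entire justification is the informal remark following the proposition: both expectations lie on the same side of $\tau$, so unfolding the absolute values reduces the claim to comparing the two expectations directly. You are also right to flag the decreasing branch: as displayed it is a typo in the paper, since with $\E_{\Z \sim \calC_k}[\hat \tau] \geq \tau$ for $k = 1,2$ the hypothesis $\E_{\Z \sim \calC_1}[\hat \tau] \leq \E_{\Z \sim \calC_2}[\hat \tau]$ gives $|\E_{\Z \sim \calC_1}[\hat \tau] - \tau| \leq |\E_{\Z \sim \calC_2}[\hat \tau] - \tau|$, so either the hypothesis or the conclusion of that branch must be reversed, exactly as you state.
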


Proposition~\ref{prop:usefulness} is a simple consequence of
Definition~\ref{def:one-sided}: if we know that two cluster-based estimates are
both lower bounds of the estimand, then the greater of the two must be less
biased. The same reasoning applies if they both upper-bound the estimand. It is
sufficient to compare the expectation of our estimators to determine which is
less biased.

The crux of our framework therefore relies on reasoning about monotonicity.
Many commonly studied parametric models of interference are in fact monotone.
Consider the following {\em linear model of interference} (e.g. studied
in~\citep{eckles2017design}):
\begin{equation}
    \label{eq:linear}
    Y_i(\Z) = \alpha_i + \beta_i Z_i + \gamma \rho_i + \epsilon_i,
\end{equation}
where for all $i$, $(\alpha_i, \beta_i, \gamma) \in \mathbb{R}^3$, $\epsilon_i
\sim \mathcal{N}(0, 1)$ is independent of $\rho_i$, and $\rho_i =
\frac{1}{|\N_i|} \sum_{j \in \mathcal{N}_i} Z_j$ is the proportion of $i$'s
neighborhood $\mathcal{N}_i$ that is treated. This expresses each unit's outcome
as a linear function of a fixed effect, a heterogeneous treatment effect, and a
network effect proportional to the fraction of my neighborhood that is treated.
As shown in the following proposition, this is  monotone.
\begin{proposition}\label{prop:simple_linear_monotone}
    For all $\calC \in \calP$, let $\theta_\calC = \frac{1}{N} \sum_i
    \frac{|\N_i \cap \calC(i)|}{|\N_i|}$ be the average proportion of a unit
    $i$'s neighborhood $\N_i$ included in its assigned cluster $\calC(i)$. Then,
    \begin{equation*}
        \tau - \E_{\Z \sim \calC}[\hat \tau] = \frac{\gamma M}{M-1} (1 -
    \theta_\calC)
    \end{equation*}
    It follows that if $\gamma \geq 0$, the interference model is
    $\calP$-increasing, otherwise it is $\calP$-decreasing.
\end{proposition}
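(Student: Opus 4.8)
The plan is to derive closed forms for both $\tau$ and $\E_{\Z\sim\calC}[\hat\tau]$ under the model~\eqref{eq:linear} and subtract. The estimand is immediate: plugging $\Z=\vec 1$ and $\Z=\vec 0$ into~\eqref{eq:linear}, the fixed effects $\alpha_i$ and the noise $\epsilon_i$ cancel in $Y_i(\vec 1)-Y_i(\vec 0)=\beta_i+\gamma$, so $\tau=\bar\beta+\gamma$ with $\bar\beta:=\frac1N\sum_i\beta_i$. (Here I take the CBR design to select a uniformly random size-$M_T$ subset of the $M$ clusters for treatment, which is the reading under which the weights $1/M_T,1/M_C$ in~\eqref{eq:HT} are well defined and the HT estimator is SUTVA-unbiased.)

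Next I would compute $\E_{\Z\sim\calC}[\hat\tau]$. Writing $Z_i=z_{\calC(i)}$ for the cluster-level assignment of unit $i$, I rewrite~\eqref{eq:HT} as $\hat\tau=\frac{M}{N}\big(\frac1{M_T}\sum_i z_{\calC(i)}Y_i(\Z)-\frac1{M_C}\sum_i(1-z_{\calC(i)})Y_i(\Z)\big)$ and substitute the linear model. Using $z_{\calC(i)}Z_i=z_{\calC(i)}$, $(1-z_{\calC(i)})\beta_iZ_i=0$, and that $\epsilon_i$ is independent of the assignment with mean zero, the noise contributes nothing in expectation and three groups of terms survive: (i) the $\alpha_i$ terms, which cancel between the two arms because each cluster is treated with marginal probability $M_T/M$; (ii) the $\beta_i$ terms in the treatment arm, which contribute exactly $\bar\beta$; and (iii) the network terms involving $\gamma\rho_i=\frac{\gamma}{|\N_i|}\sum_{j\in\N_i}z_{\calC(j)}$.

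For group (iii), the key ingredient is the joint law of the cluster indicators under complete randomization at the cluster level: $\Pr(z_k=1)=M_T/M$ and, for distinct clusters $k\neq\ell$, $\Pr(z_k=1,z_\ell=1)=\frac{M_T(M_T-1)}{M(M-1)}$ and $\Pr(z_k=0,z_\ell=1)=\frac{M_CM_T}{M(M-1)}$. I would split the neighbors of $i$ into those in $\calC(i)$ --- there are $s_i:=|\N_i\cap\calC(i)|$ of them, contributing $\E[z_{\calC(i)}^2]=M_T/M$ in the treatment arm and $0$ in the control arm --- and those outside, of which there are $|\N_i|-s_i$. Summing $\frac1{M_T}\E[z_{\calC(i)}\rho_i]$ and $\frac1{M_C}\E[(1-z_{\calC(i)})\rho_i]$ over $i$, the $(M_T-1)/(M-1)$ and $M_T/(M-1)$ factors combine to leave, per unit, $\frac1{|\N_i|M}\big(s_i-\frac{|\N_i|-s_i}{M-1}\big)$. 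Using $\frac1N\sum_i\frac{s_i}{|\N_i|}=\theta_\calC$ (hence $\frac1N\sum_i\frac{|\N_i|-s_i}{|\N_i|}=1-\theta_\calC$), the $\gamma$-contribution collapses to $\gamma\theta_\calC-\frac{\gamma}{M-1}(1-\theta_\calC)$. Thus $\E_{\Z\sim\calC}[\hat\tau]=\bar\beta+\gamma\theta_\calC-\frac{\gamma}{M-1}(1-\theta_\calC)$, and subtracting from $\tau=\bar\beta+\gamma$ yields $\tau-\E_{\Z\sim\calC}[\hat\tau]=\gamma(1-\theta_\calC)\big(1+\tfrac1{M-1}\big)=\frac{\gamma M}{M-1}(1-\theta_\calC)$. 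Since $\theta_\calC\in[0,1]$ and $\frac{M}{M-1}>0$, the sign of the bias is that of $\gamma$ for every $\calC\in\calP$, and the monotonicity conclusion follows from Definition~\ref{def:one-sided}.

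I expect the main obstacle to be the second-moment bookkeeping in step (iii): keeping the ``same cluster versus different cluster'' split straight, correctly using the negative correlation of distinct-cluster indicators under complete randomization (the source of the $1/(M-1)$ term), and handling harmless edge cases such as whether $i\in\N_i$. Everything else is routine algebra, since the linearity of~\eqref{eq:linear} is exactly what lets the $\alpha_i$ and $\epsilon_i$ terms be discarded and the $\beta_i$ terms be read off directly.
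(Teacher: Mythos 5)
Your proposal is correct and follows essentially the same route as the paper's own proof: plug the linear model into the HT estimator, use the cluster-level assignment probabilities under complete randomization to compute $\E_{\Z\sim\calC}[\hat\tau]=\bar\beta+\gamma\theta_\calC-\frac{\gamma}{M-1}(1-\theta_\calC)$, and subtract from $\tau=\bar\beta+\gamma$. The only difference is that you spell out the same-cluster versus different-cluster second-moment bookkeeping that the paper leaves implicit when it states this intermediate expectation directly.
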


We can also extend the above for heterogeneous network effect
parameters $\gamma_i$. A proof can be found in Section~\ref{sec:proofs}.
\begin{proposition}\label{prop:linear_monotone}
Let $\theta_{\calC, i} = \frac{|\N_i \cap \calC(i)|}{|\N_i|}$.  For all $\calC
  \in \calP$,
\begin{equation*}
  \tau - \E_{\Z \sim \calC}[\hat \tau] = \frac{M}{N (M-1)}\sum_i
    \gamma_i (1 - \theta_{\calC, i})
\end{equation*}
It follows that if $\sum_i \gamma_i(1- \theta_i) \geq 0$, then the interference
model is $\calP$-increasing, otherwise it is $\calP$-decreasing.
\end{proposition}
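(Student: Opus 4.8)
The plan is to prove the identity by a direct computation of both sides under the linear model~\eqref{eq:linear}, and then read off the sign statement. First, under the all-treated assignment $\Z = \vec 1$ every $\rho_i$ equals $1$, and under $\Z = \vec 0$ every $\rho_i$ equals $0$, so $Y_i(\vec 1) - Y_i(\vec 0) = \beta_i + \gamma_i$ for each $i$ (the $\alpha_i$ and $\epsilon_i$ terms are common to the two potential outcomes and cancel), whence $\tau = \frac{1}{N}\sum_i (\beta_i + \gamma_i)$. Next I would expand $\E_{\Z \sim \calC}[\hat\tau]$ from~\eqref{eq:HT} after substituting~\eqref{eq:linear} with $Z_i = z_{\calC(i)}$ and $\rho_i = \frac{1}{|\N_i|}\sum_{k \in \N_i} z_{\calC(k)}$. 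The facts I need about the $\calC$-CBR design, in which $M_T$ of the $M$ clusters are chosen uniformly at random for treatment, are $\Pr[z_j = 1] = M_T/M$ and $\Pr[z_j = z_{j'} = 1] = \tfrac{M_T(M_T-1)}{M(M-1)}$ for distinct clusters $j \ne j'$, together with the algebraic identities $z_j^2 = z_j$ and $z_j(1-z_j) = 0$.

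The core of the argument is to evaluate, for $i \in \calC_j$, the cross terms $\E[z_j\rho_i]$ and $\E[(1-z_j)\rho_i]$. Splitting the neighbors of $i$ into those lying in $\calC(i) = \calC_j$ (whose cluster indicators equal $z_j$) and those lying outside it, and using $|\N_i \cap \calC(i)| = \theta_{\calC,i}\,|\N_i|$, one obtains
\[
\E[z_j\rho_i] = \frac{M_T}{M}\,\theta_{\calC,i} + \frac{M_T(M_T-1)}{M(M-1)}\,(1-\theta_{\calC,i}), \qquad \E[(1-z_j)\rho_i] = \frac{M_C M_T}{M(M-1)}\,(1-\theta_{\calC,i}).
\]
Substituting into~\eqref{eq:HT}, the double sum $\sum_j \sum_{i \in \calC_j}$ collapses to $\sum_i$; the $\alpha_i$ and $\epsilon_i$ contributions enter the treated and the control sub-averages with the same expected weight and therefore cancel; and the coefficient of $\gamma_i$ simplifies, since $\theta_{\calC,i} + (1-\theta_{\calC,i})\tfrac{M_T-1}{M-1} - (1-\theta_{\calC,i})\tfrac{M_T}{M-1} = \theta_{\calC,i} - \tfrac{1-\theta_{\calC,i}}{M-1}$. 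This yields $\E_{\Z \sim \calC}[\hat\tau] = \frac{1}{N}\sum_i\big(\beta_i + \gamma_i\theta_{\calC,i} - \gamma_i\tfrac{1-\theta_{\calC,i}}{M-1}\big)$.

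Subtracting from $\tau$ gives $\tau - \E_{\Z \sim \calC}[\hat\tau] = \frac{1}{N}\sum_i \gamma_i\big(1 - \theta_{\calC,i} + \tfrac{1-\theta_{\calC,i}}{M-1}\big) = \frac{M}{N(M-1)}\sum_i \gamma_i(1-\theta_{\calC,i})$, which is the claimed formula. For the final clause, observe that $\theta_{\calC,i} \in [0,1]$, so $1 - \theta_{\calC,i} \ge 0$ for every $i$ and every $\calC \in \calP$; hence whenever $\sum_i \gamma_i(1-\theta_{\calC,i}) \ge 0$ for all $\calC$ (in particular whenever all $\gamma_i \ge 0$) we get $\E_{\Z \sim \calC}[\hat\tau] \le \tau$ for all $\calC \in \calP$, which is exactly $\calP$-increasing in the sense of Definition~\ref{def:one-sided}, and the reversed inequality gives $\calP$-decreasing. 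The only genuine work is the bookkeeping in the middle step — getting the second-moment term $\E[z_j z_{j'}]$ right, carrying out the neighborhood split, and verifying that the nuisance $\alpha_i$ and $\epsilon_i$ terms really do cancel rather than merely being expected to; everything else is substitution. In effect this is the computation behind Proposition~\ref{prop:simple_linear_monotone} carried out with $\gamma$ kept inside the sum as $\gamma_i$, so no new idea is required.
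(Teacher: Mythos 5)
Your proposal is correct and follows essentially the same route as the paper's own proof: substitute the linear model into the HT estimator, compute $\E_{\Z\sim\calC}[\hat\tau]$ (obtaining $\frac{1}{N}\sum_i \beta_i + \frac{1}{N}\sum_i \gamma_i\bigl(\theta_{\calC,i} - \frac{1-\theta_{\calC,i}}{M-1}\bigr)$), and subtract from $\tau = \frac{1}{N}\sum_i(\beta_i+\gamma_i)$. You simply make explicit the second-moment bookkeeping ($\E[z_j z_{j'}]$, the neighborhood split, cancellation of the $\alpha_i$ and $\epsilon_i$ terms) that the paper leaves implicit, and you correctly note that the sign condition must hold for each $\calC\in\calP$ via $\theta_{\calC,i}$.
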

It follows that if $\gamma_i \geq 0, \forall i$, then the interference mechanism
is $\calP$-increasing, and if $\gamma_i \leq 0, \forall i$, then it is
$\calP$-decreasing. If the sign of $\gamma_i$ is not consistent, then the
monotonicity depends on the clustering: if all units with a given
sign are perfectly clustered $(\theta_{C, i} = 1)$, e.g.~all units with
$\gamma_i \geq 0$, then the mechanism is once again monotone.

More sophisticated interference mechanisms, without an immediate parametric
form, are also monotone. For example, we show that the
interference mechanism present in reserve price experiments in an advertiser
auction setting is monotone (under certain conditions). See
Section~\ref{sec:application} for more details. For these complex interference
mechanisms, it can also be easier to establish the following sufficient (but not
necessary) condition:
\begin{proposition}
\label{prop:more}
We say an interference mechanism verifies the \emph{self-excitation property}
  for a set of partitions~$\calP'$, if for all units $i$ and partitions $\calC
  \in \calP'$,
\begin{align*}
  & \E_{\Z \sim \calC}[ Y_i(\Z) : Z_i = 0] \geq Y_i(\vec 0) \\
    & \E_{\Z \sim \calC}[ Y_i(\Z) : Z_i = 1] \leq Y_i(\vec 1)
\end{align*}
A $\calP'$-self-exciting process is $\calP'$-increasing. A {\em self-deexciting
  mechanism}, with flipped inequalities, is $\calP'$-decreasing.
\end{proposition}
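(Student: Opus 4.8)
The plan is to reduce the statement to the standard Horvitz--Thompson bookkeeping that already underlies the cited unbiasedness-under-SUTVA fact, and then apply the two self-excitation inequalities termwise. First I would compute $\E_{\Z\sim\calC}[\hat\tau]$ from~\eqref{eq:HT} by linearity, handling the treated-cluster sum and the control-cluster sum separately. Fix a cluster $\calC_j$ and a unit $i\in\calC_j$. Because every unit of $\calC_j$ inherits the cluster-level assignment, the event $\{z_j=1\}$ is literally the event $\{Z_i=1\}$, so $\E_{\Z\sim\calC}[z_j\,Y_i(\Z)] = \Pr(z_j=1)\,\E_{\Z\sim\calC}[Y_i(\Z):Z_i=1]$. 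Under cluster-level randomization each cluster is treated with marginal probability $M_T/M$, so this factor cancels the $1/M_T$ normalizer; the analogous identity with $M_C/M$ cancels the $1/M_C$ normalizer in the control term. Summing over $j$ and over $i\in\calC_j$ collapses the double sum to a sum over all $N$ units, and the leading $M/N$ absorbs the remaining $1/M$, giving
\begin{equation*}
  \E_{\Z\sim\calC}[\hat\tau] \;=\; \frac1N\sum_{i=1}^N\Big(\E_{\Z\sim\calC}[Y_i(\Z):Z_i=1]\;-\;\E_{\Z\sim\calC}[Y_i(\Z):Z_i=0]\Big),
\end{equation*}
which recovers unbiasedness under SUTVA as the special case where the two conditional expectations equal $Y_i(\vec 1)$ and $Y_i(\vec 0)$.

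Next I would invoke the self-excitation hypothesis directly. For every $i$ and every $\calC\in\calP'$ it gives $\E_{\Z\sim\calC}[Y_i(\Z):Z_i=1]\le Y_i(\vec 1)$ and $\E_{\Z\sim\calC}[Y_i(\Z):Z_i=0]\ge Y_i(\vec 0)$, so each summand above is at most $Y_i(\vec 1)-Y_i(\vec 0)$. Averaging over $i$ yields $\E_{\Z\sim\calC}[\hat\tau]\le\frac1N\sum_i\big(Y_i(\vec 1)-Y_i(\vec 0)\big)=\tau$ for all $\calC\in\calP'$, which is exactly the $\calP'$-increasing condition of Definition~\ref{def:one-sided}. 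The self-deexciting case is the mirror image: both inequalities flip, giving $\E_{\Z\sim\calC}[\hat\tau]\ge\tau$ and hence $\calP'$-decreasing.

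There is no deep obstacle here; the argument is essentially accounting, and the real content is the self-excitation hypothesis itself rather than the passage from it to monotonicity. The one place requiring care is the first step: one must pin down the cluster-level randomization so that each cluster's marginal treatment probability is exactly $M_T/M$, which is what makes the $M_T$ and $M_C$ normalizers cancel cleanly — if clusters were instead assigned by independent fair coin flips the same identity holds after conditioning on the realized counts, with the usual caveat that $\hat\tau$ is defined only on the event $\{M_T\geq 1,\ M_C\geq 1\}$. I would also spend a sentence clarifying that ``$:\,Z_i=z$'' denotes conditioning, and noting that $\E_{\Z\sim\calC}[Y_i(\Z):Z_i=1]$ is unambiguous precisely because $i$'s cluster membership determines which coin sets $Z_i$.
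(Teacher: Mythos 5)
Your proposal is correct and follows essentially the same route as the paper: the paper's proof also rests on the identity $\tau - \E_{\Z \sim \calC}[\hat \tau] = \frac{1}{N}\sum_i \big( Y_i(\vec 1) - \E_{\Z\sim\calC}[Y_i(\Z) \mid z_{C(i)} = 1] \big) + \big( \E_{\Z\sim\calC}[Y_i(\Z) \mid z_{C(i)} = 0] - Y_i(\vec 0) \big)$ followed by a termwise application of the self-excitation inequalities, which is exactly your decomposition written with $\tau$ moved to the other side. Your derivation simply fills in the Horvitz--Thompson bookkeeping (the cancellation of the $M_T$ and $M_C$ normalizers against the cluster-level treatment probabilities) that the paper states without detail.
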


The proof is included in Section~\ref{sec:proofs}. The two
inequalities capture the following phenomenon: conditioned on my treatment
status, if my outcome is greatest when my neighborhood is entirely in treatment,
and lowest when my neighborhood is entirely in control, then an experiment
always under-estimates the true treatment effect. This only needs to
be true in \emph{expectation} over the assignments $\Z$, even if, in practice,
we can show that the inequalities hold for all $\Z$ (cf.
Section~\ref{sec:application}).

We say the interference mechanism is self-exciting because these inequalities
are verified when units benefit from being surrounded by units in treatment. A
successful messaging feature launch is a straightforward example of a
self-exciting process, as is any pricing mechanism that penalizes any treated
bidders and boosts the utility of their competitors.

\subsection{An experiment-of-experiments design}

\begin{figure}
  \centering
  \includegraphics[scale=.6]{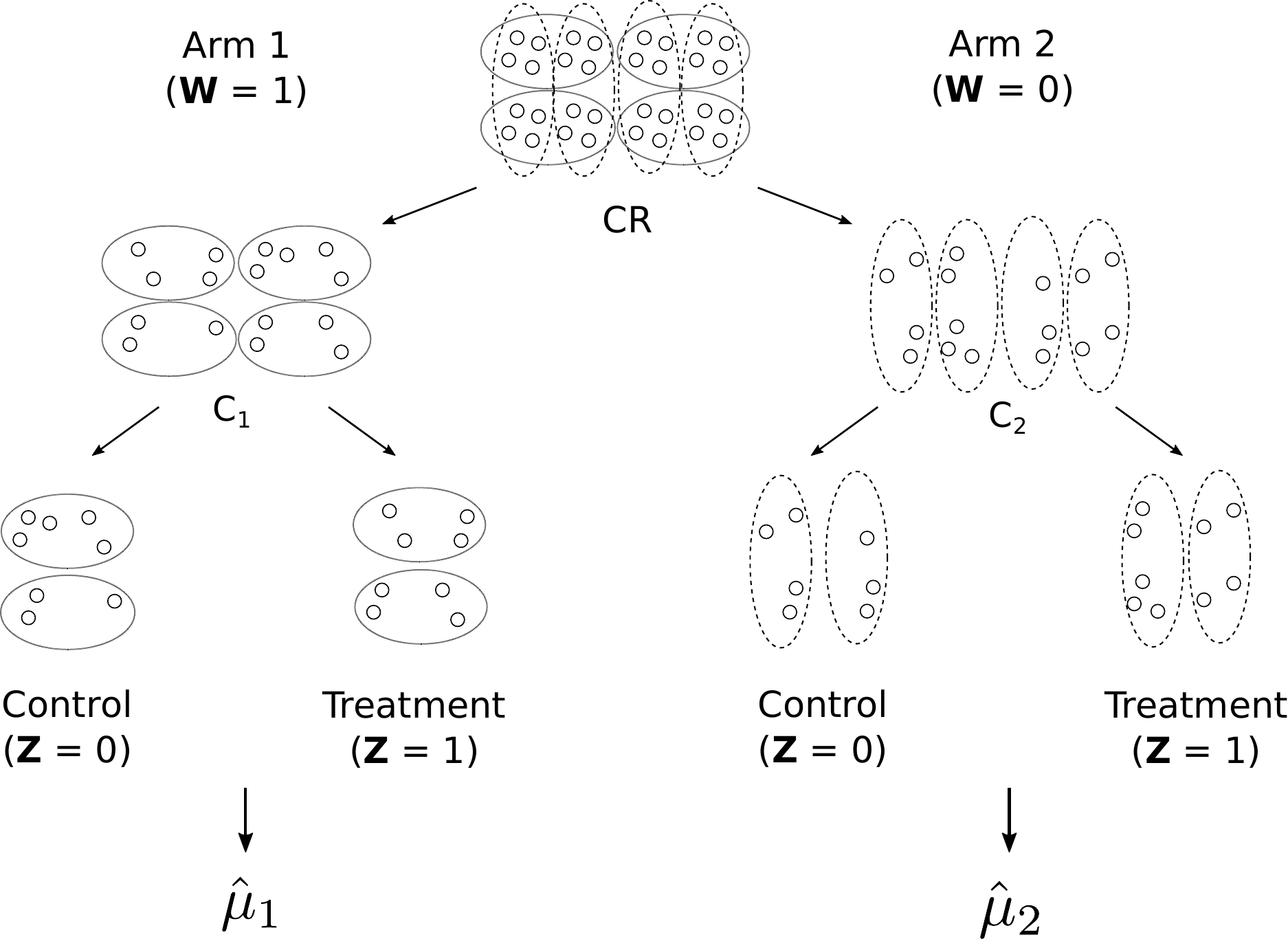}
  \caption{A hierarchical experimental design, which assigns the experimental
  units to one of two cluster-based randomized designs, $C_1$ and $C_2$,
  completely at random (CR).  $\hat \tau^\W_1$ and $\hat \tau^\W_2$ represent
the treatment effect estimates under each design respectively.}\label{fig:hier}
\end{figure}

Under monotonicity, Proposition~\ref{prop:usefulness} states that we can
determine the least-biased of two $\calP$-increasing or $\calP$-decreasing
cluster-based designs, without knowledge of the estimand, by comparing the
expectation of their estimates. However, only one cluster-based design can ever
be applied to the set of experimental units in its entirety, and the comparison
of $\E_{\Z \sim \calC_1} [\hat \tau]$ with $\E_{\Z \sim \calC_2}[\hat \tau]$
cannot be done directly.

This resembles the fundamental problem of causal inference, which
states that units cannot be placed both in treatment and control buckets, and is
solved through randomization. Inspired by~\citep{saveski2017detecting,
pouget2017testing}, we suggest to randomly assign different units to either
clustering algorithm, resulting in a 2-step hierarchical randomized design. The
procedure, described in pseudo-code in Algorithm~\ref{alg:hier}, is as follows:

\begin{itemize}
  \item Assign units completely at random to two design buckets, one for each
    clustering algorithm. Let $\W \in {\{1, 2\}}^N$ be the vector representing
    that assignment.
  \item Within each design bucket, cluster the remaining units together
    according to the appropriate partition: if $W_i = W_j = k$ \emph{and}
    $C_k(i) = C_k(j)$, then $i$ and $j$ belong to the same cluster in design
    bucket $k \in \{1, 2\}$. The resulting partitions are $C_1^\W$ and $C_2^\W$.
  \item Within each design bucket, assign the resulting clusters to treatment
    and control. Let $\Z$ be the resulting assignment vector. This is
    possible because no unit belongs to both $\calC_1^\W$ and $\calC_2^\W$.
\end{itemize}

\begin{algorithm}
\SetAlgoNoLine%
  \KwIn{Partitions $\calC_1,~\calC_2$ of the $N$ units into $M_1,~M_2$
  clusters.}
  \KwOut{$\Z \in {\{0,1\}}^N$ encoding the assignment of each unit to a
  treatment or control bucket.}
\caption{Experiment of experiments design}
Choose $\W \in {\{1, 2\}}^N$ uniformly at random, encoding the
assignment of units to design arms $1$ and $2$\;
\For{$k \in \{1, 2\}$}{%
  Let $C_k^\W$ be the clustering on $\{i\in [1,N]: W_i = k\}$ such that
  $C_k^\W(i) = C_k^\W(j)~\text{iff}~C_k(i) = C_k(j)$\;
  Assign units in treatment arm $k$ to treatment and
  control with a $\calC_k^\W$-cluster-based design\;}
\Return~the resulting assignment vector $\Z$\;
\label{alg:hier}
\end{algorithm}
Algorithm~\ref{alg:hier} provides us with two estimates, $\hat{\tau}^\W_1$ and
$\hat{\tau}^\W_2$, of the causal effect, one from each design arm.  The
resulting clusterings $\calC_1^\W$ and $\calC_2^\W$ may be unbalanced.  This is
of minor importance as the HT estimator (cf.  Eq.~\ref{eq:HT}) is unbiased
(under SUTVA) for unbalanced clusterings, and balancedness is required only to
control its variance. In practice, $\calC_1$ and $\calC_2$ are not required to
have the same number of clusters, but we expect the clusters sizes to be large
enough for each cluster to have at least one unit in each design arm after the
first stage with high probability.

From the comparison of $\hat \tau^\W_1$ and $\hat{\tau}^\W_2$, we seek
to order $\E_{\Z \sim \calC_1}[\hat{\tau}_1]$ and $\E_{\Z \sim
  \calC_2}[\hat{\tau}_2]$.  Under arbitrary interference structures,
these proxy estimates are not guaranteed to have the same ordering,
the key condition for Proposition~\ref{prop:usefulness}.  Intuitively,
$\hat{\tau}^\W_1$ and $\hat{\tau}^\W_2$ represent the treatment effect
estimates for two ``weakened'' versions of each partitioning $\calC_1$
and $\calC_2$.  This is where a completely randomized assignment helps. Because
the assignment of units to design arms is done completely at random, it affects
each partitioning in the same way, and we expect the ordering to stay the same.
For the linear model of interference in Prop.~\ref{prop:linear_monotone}, we
have:
\begin{property}
\label{prop:transitivity}
An interference mechanism is said to be $\calP'$-transitive if $\forall
  (\calC_1, \calC_2) \in \calP'^2$,
  \begin{align*}
    \E_{\W, \Z \sim \calC_1^\W}\left[ \hat \tau^\W_1 \right]
    \leq \E_{\W, \Z \sim \calC_2^\W} \left[  \hat \tau^\W_2\right]
    \Leftrightarrow \E_{\Z \sim \calC_1}[ \hat \tau ]  \leq \E_{\Z \sim
    \calC_2}[ \hat \tau ]
  \end{align*}
\end{property}

As a sanity check, we can also confirm that the property holds for SUTVA\@.~The
property can also be shown for the linear interference mechanisms introduced in
Prop.~\ref{prop:linear_monotone}:
\begin{proposition}\label{prop:linear_transitive}
  Under SUTVA, it holds that
  \begin{equation*}
  \E_{\W, \Z \sim \calC_k^\W} \left[\hat \tau^\W_k \right] =
  \E_{\Z \sim \calC_k}[\hat \tau] = \tau.
\end{equation*}
  Hence, the no-interference case is
  trivially $\calP$-transitive.  Furthermore, the linear model of interference
  in Prop.~\ref{prop:linear_monotone} is $\calP$-transitive if the same number
  of units is assigned to each design arm: $\sum [W_i = 1] = \frac{N}{2}$.
\end{proposition}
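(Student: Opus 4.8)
The plan is to treat the two assertions separately. For the no-interference case I will compute both expectations directly; for the linear model I will derive a closed form for $\E_{\W, \Z \sim \calC_k^\W}[\hat\tau^\W_k]$ and compare it, as a function of $k$, with the expression for $\E_{\Z \sim \calC_k}[\hat\tau]$ given by Proposition~\ref{prop:linear_monotone}. In the no-interference case, observe that conditionally on $\W$ the estimator $\hat\tau^\W_k$ is exactly the Horvitz--Thompson estimator for the clustering $\calC_k^\W$ applied to the sub-population $I_k := \{i : W_i = k\}$; by the cited unbiasedness of HT under SUTVA for arbitrary, possibly unbalanced, clusterings, $\E_{\Z \sim \calC_k^\W}[\hat\tau^\W_k \mid \W] = |I_k|^{-1}\sum_{i \in I_k}(Y_i(1) - Y_i(0))$. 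Conditioning further on $|I_k| = m$, every $m$-subset of units is equally likely, so this conditional mean equals $\frac1N\sum_i (Y_i(1)-Y_i(0)) = \tau$ for each $m$, hence $\E_{\W, \Z \sim \calC_k^\W}[\hat\tau^\W_k] = \tau$. Since also $\E_{\Z \sim \calC_k}[\hat\tau] = \tau$, both sides of the transitivity equivalence collapse to ``$\tau \le \tau$'' and it holds vacuously; note this argument does not use the balanced-arms hypothesis.

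For the linear model, the core step is to compute $\E_{\Z \sim \calC_k^\W}[\hat\tau^\W_k \mid \W]$. Substituting $Y_i(\Z) = \alpha_i + \beta_i Z_i + \gamma_i \rho_i + \epsilon_i$ into the HT estimator restricted to arm $k$, the $\alpha_i$ and $\epsilon_i$ terms contribute nothing and $\beta_i Z_i$ contributes $|I_k|^{-1}\sum_{i\in I_k}\beta_i$ as under SUTVA; the only delicate term is $\gamma_i \rho_i$. For each $i \in I_k$ I will split its neighbours in $\rho_i = |\N_i|^{-1}\sum_{l \in \N_i} Z_l$ into three groups: $l$ in the same within-arm cluster as $i$ (so $Z_l = Z_i$); $l \in I_k$ in a different within-arm cluster (whose assignment is correlated with $i$'s only through the cluster-level draw); and $l \notin I_k$ (whose assignment is independent of arm $k$'s draw, with marginal probability $\tfrac12$). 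The key observations are that the third-group terms enter the treatment and control halves of the HT estimator with identical coefficients and therefore cancel, and that the dependence of the first- and second-group contributions on the number of arm-$k$ clusters placed in treatment cancels exactly as it does in the proof of Proposition~\ref{prop:linear_monotone}. Writing $a^\W_i = |\N_i \cap \calC_k(i) \cap I_k|$ and $b^\W_i = |\N_i \cap I_k| - a^\W_i$, I expect this to yield
\[
  \E_{\Z \sim \calC_k^\W}[\hat\tau^\W_k \mid \W] \;=\; \frac1{|I_k|}\sum_{i \in I_k}\beta_i \;+\; \frac1{|I_k|}\sum_{i \in I_k}\frac{\gamma_i}{|\N_i|}\Big(a^\W_i - \frac{b^\W_i}{M^\W_k - 1}\Big).
\]

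Next I would take the expectation over $\W$, using $\sum_i [W_i = 1] = N/2$ to make $|I_k| = N/2$ deterministic and the paper's large-cluster regime to ensure $M^\W_k = M_k$. Then $\Pr[W_i = k] = \tfrac12$ and $\Pr[W_i = k, W_l = k] = q := \frac{(N/2)(N/2-1)}{N(N-1)}$ for $i \ne l$, so summing over neighbours turns each sum over $\N_i \cap I_k$ into $q$ times the corresponding sum over $\N_i$. Setting $G := \sum_i \gamma_i$ and $T_k := \sum_i \gamma_i \theta_{\calC_k, i}$ with $\theta_{\calC_k,i} = |\N_i \cap \calC_k(i)| / |\N_i|$, I expect
\[
  \E_{\W, \Z \sim \calC_k^\W}[\hat\tau^\W_k] \;=\; \frac1N\sum_i \beta_i \;+\; \frac{2q}{N}\cdot\frac{M_k T_k - G}{M_k - 1},
\]
while Proposition~\ref{prop:linear_monotone} rearranges to $\E_{\Z \sim \calC_k}[\hat\tau] = \frac1N\sum_i \beta_i + \frac1N\cdot\frac{M_k T_k - G}{M_k - 1}$. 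Both are affine functions of the single scalar $x_k := (M_k T_k - G)/(M_k - 1)$ with the same intercept and strictly positive slope, so $\E_{\W, \Z \sim \calC^\W_1}[\hat\tau^\W_1] \le \E_{\W, \Z \sim \calC^\W_2}[\hat\tau^\W_2] \Leftrightarrow x_1 \le x_2 \Leftrightarrow \E_{\Z \sim \calC_1}[\hat\tau] \le \E_{\Z \sim \calC_2}[\hat\tau]$, which is exactly $\calP$-transitivity.

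The main obstacle is the bookkeeping in the conditional-expectation step: partitioning each unit's neighbourhood into the three groups, tracking the cluster-level first and second moments in each case, and in particular verifying the cancellation of the cross-arm contributions between the treatment and control halves of the HT estimator --- this cancellation is precisely what lets the answer depend on $\W$ only through the within-arm counts $a^\W_i$ and $b^\W_i$, which is what makes the subsequent expectation over $\W$ tractable. A secondary point requiring care is the reduction $M^\W_k = M_k$: without it $\E_\W$ would involve the expectation of a ratio, so one must either work in the large-cluster regime or condition on the event that no cluster of $\calC_k$ is emptied within an arm.
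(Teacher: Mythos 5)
Your proposal is correct and follows essentially the same route as the paper's proof: establish the SUTVA case by iterated expectation, then for the linear model compute $\E_{\Z \sim \calC_k^\W}[\hat\tau^\W_k \mid \W]$, average over $\W$ using the pairwise inclusion probability $\tfrac{(N/2)(N/2-1)}{N(N-1)}$, and compare with the closed form from Proposition~\ref{prop:linear_monotone}. If anything, your execution is slightly tighter than the paper's: the explicit three-way split of each neighbourhood (same within-arm cluster, other within-arm cluster, other arm) with the cross-arm cancellation, and the observation that both expectations are affine in the common scalar $(M_k T_k - G)/(M_k-1)$ with positive slope, gives the ordering equivalence exactly, whereas the paper concludes via a large-$M_k$ approximation; your flagged caveat about $M^\W_k = M_k$ (no cluster emptied in an arm) is the same large-cluster assumption the paper makes implicitly.
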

A full proof can be found in Section~\ref{sec:proofs}. For more complex
mechanisms of interference, as is the case for reserve price experiments, we
use simulations to confirm the intuition that transitivity holds. See
Section~\ref{sec:experimental} for more details.

As is common with A/B tests, we do not have access to the expectation of our
estimators, and rely on approximations to the variance, such as Neymann's
variance estimator. In order to meaningfully compare the
estimates we obtain, we must apply our method of choice to determine when 
their ordering is significant. For example, we can make a normal
approximation to the distribution of the estimates--- using Neymann's 
estimator to upper-bound the variance ---to estimate the probability that one
estimate is greater than the other with a certain significance level:
\begin{proposition}\label{prop:statistical_test}
 For $k \in \{1, 2\}$, recall the definition of the Neymannian variance
  estimator for cluster-based randomized designs:
  \begin{equation}\label{eq:neymann}
  \hat \sigma_k^\W = \frac{M_k}{N_k} \left(\frac{\hat S_{k, t}}{M_{k,t}} +
    \frac{\hat S_{k, c}}{M_{k,c}} \right),
 \end{equation}
 where $M_k$ (resp. $N_k$) is the number of clusters (resp.~units) in
 $\calC^\W_k$, $\hat S_{k, t} = var\{Y'_{j,k} : z_j = 1\}$ and $\hat S_{k, c} =
 var\{Y'_{j,k} : z_j = 0\}$, and $Y'_{j, k} = \sum_{\calC^\W_k(i) = j} Y_i$.
  Assume that the interference mechanism is transitive and $\calP'$-increasing,
  such that $(\calC_1, \calC_2) \in \calP'^2$. If $\alpha$ is the level of
  significance chosen, we state that $\calC_1$ is a significantly better
  clustering than $\calC_2$ if and only if
  \begin{equation*}
    \Phi\left(\frac{\hat \tau_1^\W - \hat \tau_2^\W}{\sqrt{\hat \sigma_1^\W +
    \hat \sigma_2^\W}}\right) < \alpha,
  \end{equation*}
  where $\Phi$ is the cdf of the normal distribution.
\end{proposition}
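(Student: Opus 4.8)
This statement packages a one-sided, level-$\alpha$ hypothesis test, so the content to establish is that the decision rule controls the probability of wrongly declaring one clustering strictly better than the other. The plan is to carry out the analysis conditionally on the realized first-stage split $\W$, so that $\calC_1^\W$ and $\calC_2^\W$ are two fixed partitions of disjoint sets of units and the remaining randomness is the independent within-arm cluster-level assignments. First I would take as the (least favorable) null the boundary configuration $\E[\hat\tau_1^\W] = \E[\hat\tau_2^\W]$ inside the composite null ``$\calC_1$ is not the strictly less biased clustering'': under transitivity (Property~\ref{prop:transitivity}) the ordering of the proxy expectations coincides with the ordering of $\E[\hat\tau_1]$ and $\E[\hat\tau_2]$, and under $\calP'$-monotonicity Proposition~\ref{prop:usefulness} converts that ordering into the ordering of biases, so rejecting equality of the proxy means in the direction picked out by the rejection region is exactly a certificate that the corresponding clustering is the less biased one. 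On the boundary, $\hat\tau_1^\W - \hat\tau_2^\W$ has mean zero, which is the worst case for the false-positive rate.

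Next I would supply the two distributional ingredients that turn ``mean zero on the boundary'' into the stated threshold on $\Phi$. (i) The two design arms use disjoint units and independent cluster randomizations, so, conditionally on $\W$, $\hat\tau_1^\W$ and $\hat\tau_2^\W$ contribute no covariance that could inflate the variance of their difference and $\mathrm{Var}(\hat\tau_1^\W - \hat\tau_2^\W \mid \W) \le \mathrm{Var}(\hat\tau_1^\W \mid \W) + \mathrm{Var}(\hat\tau_2^\W \mid \W)$. (ii) Each $\hat\tau_k^\W$ is a Horvitz--Thompson estimator under a cluster-based completely randomized assignment, so a finite-population central limit theorem yields the normal approximation $\hat\tau_k^\W \approx \mathcal{N}(\E[\hat\tau_k^\W],\, \mathrm{Var}(\hat\tau_k^\W\mid\W))$ as the number of clusters grows; and, as the paper already notes, the Neymannian estimator $\hat\sigma_k^\W$ of Eq.~\ref{eq:neymann} is conservative, $\E[\hat\sigma_k^\W] \ge \mathrm{Var}(\hat\tau_k^\W\mid\W)$, the gap being the unidentified variance of the cluster-level treatment-minus-control sums. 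Summing over $k$, $\hat\sigma_1^\W + \hat\sigma_2^\W$ dominates (in expectation) $\mathrm{Var}(\hat\tau_1^\W - \hat\tau_2^\W\mid\W)$, and $\hat\tau_1^\W - \hat\tau_2^\W$ is approximately normal with that variance.

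Combining the pieces: under the boundary null the studentized statistic $T = (\hat\tau_1^\W - \hat\tau_2^\W)/\sqrt{\hat\sigma_1^\W + \hat\sigma_2^\W}$ is mean zero, approximately normal, and divided by a quantity no smaller than its true standard deviation, hence stochastically dominated by a standard normal in the relevant tail; therefore the probability that $\Phi(T)$ falls in the rejection region is at most $\alpha$, which is exactly the claim that the displayed rule is a valid level-$\alpha$ test. Reading the chain backwards, a rejection yields, with confidence $1-\alpha$, the ordering of the proxy expectations, hence by Property~\ref{prop:transitivity} the ordering of $\E[\hat\tau_1]$ and $\E[\hat\tau_2]$, hence by Proposition~\ref{prop:usefulness} the certificate that the named clustering is less biased; and when that clustering is in fact strictly less biased the mean of $T$ shifts into the rejection region, giving the test power. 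As a consistency check I would invoke Proposition~\ref{prop:linear_transitive}: under SUTVA both proxy estimates are unbiased for $\tau$, so $T$ is centered at zero and the rule flags a spurious winner at rate at most $\alpha$.

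The step I expect to be the main obstacle is (ii): making the normal approximation rigorous needs a finite-population CLT for the cluster-randomized Horvitz--Thompson estimator that also absorbs the additional layer of randomness from the first-stage assignment $\W$, and the variance-domination claim must be argued with care in the presence of cross-arm interference---it is clean only once one checks, conditionally on $\W$ and using the disjointness of the arms, that the covariance between $\hat\tau_1^\W$ and $\hat\tau_2^\W$ does not inflate the variance of their difference, so that the standard per-arm conservativeness of Neymann's estimator transfers additively to the sum. Everything else---translating ``better clustering'' into an ordering of expectations via Property~\ref{prop:transitivity} and Proposition~\ref{prop:usefulness}, and reading the threshold off the normal cdf---is bookkeeping.
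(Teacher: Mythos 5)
Your high-level strategy---Gaussian approximation, treating the two arms as non-covarying, conservativeness of Neymann's estimator, and using Property~\ref{prop:transitivity} plus Proposition~\ref{prop:usefulness} to convert an ordering of expectations into an ordering of biases---is the same as the paper's, but there is a genuine gap at exactly the point you flag as the ``main obstacle,'' and it is the one piece of real mathematics in the paper's justification. You condition on $\W$ throughout and invoke only the per-arm, conditional conservativeness $\E[\hat\sigma_k^\W] \geq var_{\Z}(\hat\tau_k^\W \mid \W)$. However, the quantity that transitivity lets you interpret is the expectation over \emph{both} stages, $\E_{\W,\Z \sim \calC_k^\W}[\hat\tau_k^\W]$, so the null hypothesis concerns marginal (over-$\W$) means and the relevant spread is the unconditional variance, which by Eve's law carries the additional between-$\W$ term $var_\W\bigl(\E_{\Z}[\hat\tau_k^\W \mid \W]\bigr)$. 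Conditioning on $\W$ does not make this term disappear: for a fixed $\W$ the conditional means of the two proxies can differ even when their marginal means are equal, so a conditionally valid test of the conditional null is not a test of the null you actually need. The paper's appendix supplies precisely the missing step: a lemma showing that, under SUTVA and balanced clusters, $\E_{\W,\Z}[\hat\sigma_k^\W] \geq var_{\W,\Z}[\hat\tau_k^\W]$, proved by decomposing the total variance via Eve's law, identifying the between-$\W$ component as $var(Y(1)-Y(0))/N$, and showing it is absorbed by the unidentified slack term $var(Y'(1)-Y'(0))/N_k$ of the Neymannian estimator through a Cauchy--Schwarz argument. Your proposal names this difficulty but does not resolve it, so the central technical content of the paper's argument is missing from your proof.

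Two smaller points. First, you aim to prove a full level-$\alpha$ guarantee under interference; the paper does not claim this---its conservativeness lemma is proved only under SUTVA, cross-arm independence of the two estimates is an explicit approximation (with a suggested sensitivity analysis over the correlation), and the proposition is offered as a practical Gaussian-approximation rule rather than an exact test, so your step (i) would in fact fail in general because units in arm $1$ can interfere with units assigned in arm $2$. Second, check the direction of the rejection region: for a $\calP'$-increasing mechanism the less biased clustering has the \emph{larger} expectation, so declaring $\calC_1$ better corresponds to a significantly positive standardized difference, and your claim that the mean of $T$ then ``shifts into the rejection region'' $\Phi(T) < \alpha$ has the sign reversed (as, arguably, does the display in the proposition itself).
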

A similar reasoning applies to $\calP'$-decreasing mechanisms.  If the Gaussian
approximation is not appropriate, the distribution of the estimators can equally
be approximated by a bootstrap analysis, or a more sophisticated model-based
imputation method~\citep{imbens2015causal}. More details can be found in
Section~\ref{sec:proofs}.


\section{Application to reserve price experiments}
\label{sec:application}

Online advertising exchanges provide an interface for bidders to participate in
a set of auctions for advertising online. These ads can appear within the
company's own content, in a social feed, below a search query, or on the webpage
of an affiliated publisher. These auctions provide the vast majority of revenue
to these platforms, and are thus the subject of experimentation and
optimization.
Platforms run experiments and monitor different metrics including of revenue
and estimates of bidders' welfare. One such welfare metric is the sum of the
bids of advertisers, and another metric is the sum of estimated utility of
bidders via another utility estimator.

One possible parameter subject to optimization is the method of determining reserve
prices. Online marketplaces can choose to implement a reserve price, which sets
the minimum bid required for a bid to be valid and compete with others. It may
vary from bidder to bidder, and from auction to auction.  A higher reserve may
improve revenue, but if it is too high, then too many bids are discarded and ad
opportunities can go unsold.

Modifications to a reserve price rule are prime examples of experiments where
SUTVA does not hold.  A change in reserve price to one bidder affects the
bidding problem facing another bidder,  even when her reserve is unchanged
(e.g., reducing competition when the reserve to the first bidder is higher).
%
%
Although we ignore them here, budget constraints are another factor--- if a
budget-constrained bidder faces higher reserve prices, then she may adjust her
bids to re-optimize return on investment.
%
%
Working without budget constraints, we establish conditions under
which the resulting interference mechanism within reserve price
experiments is monotone, both in the case of a single-item second
price auction setting and in the Vickrey-Clarke-Groves auction setting
for positional ads. See~\citep{varian2014vcg} for a reference.

\subsection{Single-item second price auctions}

We consider a single-item second-price auction with $N$ bidders $B =
{\{B_i\}}_{i \in N}$: the highest bidder wins the auction and is
charged the maximum of her reserve price and the second-highest bid.
The second price auction is truthful (bidding true values is a
dominant-strategy equilibrium), and we will assume that the bidders
are rational.

Consider two reserve price mechanisms ${(r_i)}_{i \in B}$ (control) and
${(r'_i)}_{i\in B}$ (treatment). Suppose that the reserve price mechanism
corresponding to treatment always sets a higher reserve price than the reserve
price mechanism corresponding to control: $\forall i, r'_i > r_i$.  By symmetry,
the following argumentation would also work if the treatment and control labels
were switched.

We suppose the bidders have unobserved values $(v_i)$ for winning the auction.
We randomly assign bidders to either the treatment or control reserve price
mechanism, with $\Z$ the resulting assignment.  The chosen metric of interest is
a bidder's utility, denoted by $Y_i(\Z)$. For a second-price auction, $Y_i = 0$
if bidder $i$ does not win the auction, and $Y_i = v_i - p$ when she wins the
auction and pays price $p$. The bidder welfare of an auction is the sum
of each bidder's utility, $\sum_i Y_i(\Z)$,
and  the estimand is given by:
\begin{equation*}
S = \sum_i Y_i(\vec 1) - \sum_i Y_i(\vec 0)
\end{equation*}

Tthe reserve price experiment for second price auctions verifies the
self-excitation property (cf.  Prop.~\ref{prop:more}). The idea is
that assigning a unit to the intervention can only make them less
competitive by discarding their bid from the auction. Thus, the higher
the number of treated units, the lower the competition for the
remaining bidders, and the higher their utility.

\begin{theorem}\label{thm:second_price}
Consider a set of rational agents with no budget-constraints.  Let the outcome
  of interest be each agent's welfare. The interference mechanism of a reserve
  price experiment, assigning treated units to a higher personalized reserve
  price, for a single-item second-price auction is self-exciting, and thus
  monotone.
\end{theorem}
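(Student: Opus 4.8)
The plan is to prove the stronger, \emph{assignment-wise} form of the self-excitation inequalities in Proposition~\ref{prop:more} and then average. Concretely, I will show that for \emph{every} $\Z \in {\{0,1\}}^N$,
\[
  Z_i = 0 \ \Longrightarrow\ Y_i(\Z) \geq Y_i(\vec 0), \qquad\qquad Z_i = 1 \ \Longrightarrow\ Y_i(\Z) \leq Y_i(\vec 1).
\]
Conditioning on $Z_i$ and taking the expectation over $\Z \sim \calC$ then yields both inequalities of Proposition~\ref{prop:more} for every clustering $\calC \in \calP$, so the mechanism is $\calP$-self-exciting; Proposition~\ref{prop:more} then gives that it is $\calP$-increasing, hence monotone.

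The substance is a closed form for $Y_i(\Z)$. Since the second-price auction is truthful and the agents are rational and unconstrained, each agent $j$ bids $v_j$ regardless of $\Z$, and the only role of $\Z$ is to decide which bids clear their personalized reserve. Writing $\rho_j(\Z) = r'_j$ when $Z_j = 1$ and $\rho_j(\Z) = r_j$ when $Z_j = 0$, let
\[
  V_i(\Z) = \{\, v_j : j \neq i,\ v_j \geq \rho_j(\Z) \,\}
\]
be the set of valid competing bids faced by $i$, with the convention $\max \emptyset := 0$. A short case analysis of the second-price rule with a reserve attached to the winner gives, for any fixed tie-breaking rule,
\[
  Y_i(\Z) = \bigl(\, v_i - \max\{\rho_i(\Z),\ \max V_i(\Z)\} \,\bigr)^{+},
\]
because $i$ wins precisely when $v_i$ clears her own reserve and is at least every valid competing bid, in which case she pays the larger of her reserve and the highest competing valid bid.

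The engine of the proof is then a one-line monotonicity fact: raising an agent's reserve can only delete her bid from the valid set, so if $\Z$ and $\Z'$ agree on coordinate $i$ and $\Z \leq \Z'$ componentwise, then $V_i(\Z') \subseteq V_i(\Z)$ and hence $\max V_i(\Z') \leq \max V_i(\Z)$. Applying this: when $Z_i = 0$ we have $\vec 0 \leq \Z$ with both agreeing on $i$ (reserve $r_i$ in each), so $\max V_i(\Z) \leq \max V_i(\vec 0)$, and since $a \mapsto (v_i - \max\{r_i, a\})^{+}$ is non-increasing the closed form gives $Y_i(\Z) \geq Y_i(\vec 0)$; when $Z_i = 1$ we have $\Z \leq \vec 1$ with both agreeing on $i$ (reserve $r'_i$ in each), so $\max V_i(\vec 1) \leq \max V_i(\Z)$, and the same monotone map gives $Y_i(\Z) \leq Y_i(\vec 1)$. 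This establishes the displayed inequalities and completes the argument.

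I expect the fiddly — rather than deep — step to be pinning down the closed form for $Y_i(\Z)$: checking that it is insensitive to the tie-breaking convention, confirming that ``the winner pays the maximum of her own reserve and the second-highest valid bid'' is the rule in force (a common uniform reserve would change nothing here), and handling the boundary cases where $V_i(\Z)$ is empty or where $v_i$ fails to clear $\rho_i(\Z)$. Everything downstream is the set-inclusion monotonicity above. It is worth remarking explicitly that truthfulness is exactly what rules out cross-auction strategic feedback here: each bid is pinned at $v_j$, so the only channel of interference is the mechanical one of discarded bids, which is why the self-excitation inequalities can be established pointwise in $\Z$ and not merely in expectation.
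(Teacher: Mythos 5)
Your proof is correct and follows essentially the same route as the paper: both establish the self-excitation inequalities of Proposition~\ref{prop:more} pointwise in $\Z$ (comparing $\vec 0$ to any $\Z$ with $Z_i=0$, and symmetrically for treatment) using the fact that higher reserves can only discard competing bids, thereby lowering the price or turning a loss into a win. Your closed form $Y_i(\Z) = \bigl(v_i - \max\{\rho_i(\Z), \max V_i(\Z)\}\bigr)^{+}$ together with the set-inclusion monotonicity of the valid-bid set is just a compact repackaging of the paper's three-case analysis, and it handles the boundary cases correctly.
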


\begin{proof}
Consider bidder i's outcome under $\Z = \vec 0$ and under any assignment $\Z'$
  such that $Z_i = 0$. There are three possible cases:
\begin{itemize}
    \item Bidder $i$ wins the auction in neither assignment. Her utility is
        therefore constant.
    \item Bidder $i$ wins the auction in only one assignment. It must be that
        bidder $i$ wins under $\Z'$ but not $\Z$. Her utility is $0$ under $\Z$
        and greater than $0$ under $\Z'$.
    \item Bidder $i$ wins the auction under both assignments. If the second
        highest bid is the same under both assignments, bidder $i$'s utility is
        constant.  Otherwise, the second highest bid under $\Z'$ can only be
        lower than the second highest bid under $\Z$. Thus bidder $i$'s payment
        is lower and her utility is higher under assignment $\Z'$ than under
        assignment $\Z$.
\end{itemize}
By symmetry, we reach a similar conclusion when comparing assignments
$\Z = \vec 1$ and any assignment $\Z'$ such that $Z'_i = 1$.
\end{proof}

It follows that the reserve price experiment is $\calP$-increasing,
and any cluster-based randomized design underestimates the 
bidder welfare estimand.

\subsection{Positional ad auctions}

\begin{figure}
\centering
\includegraphics[scale=.5]{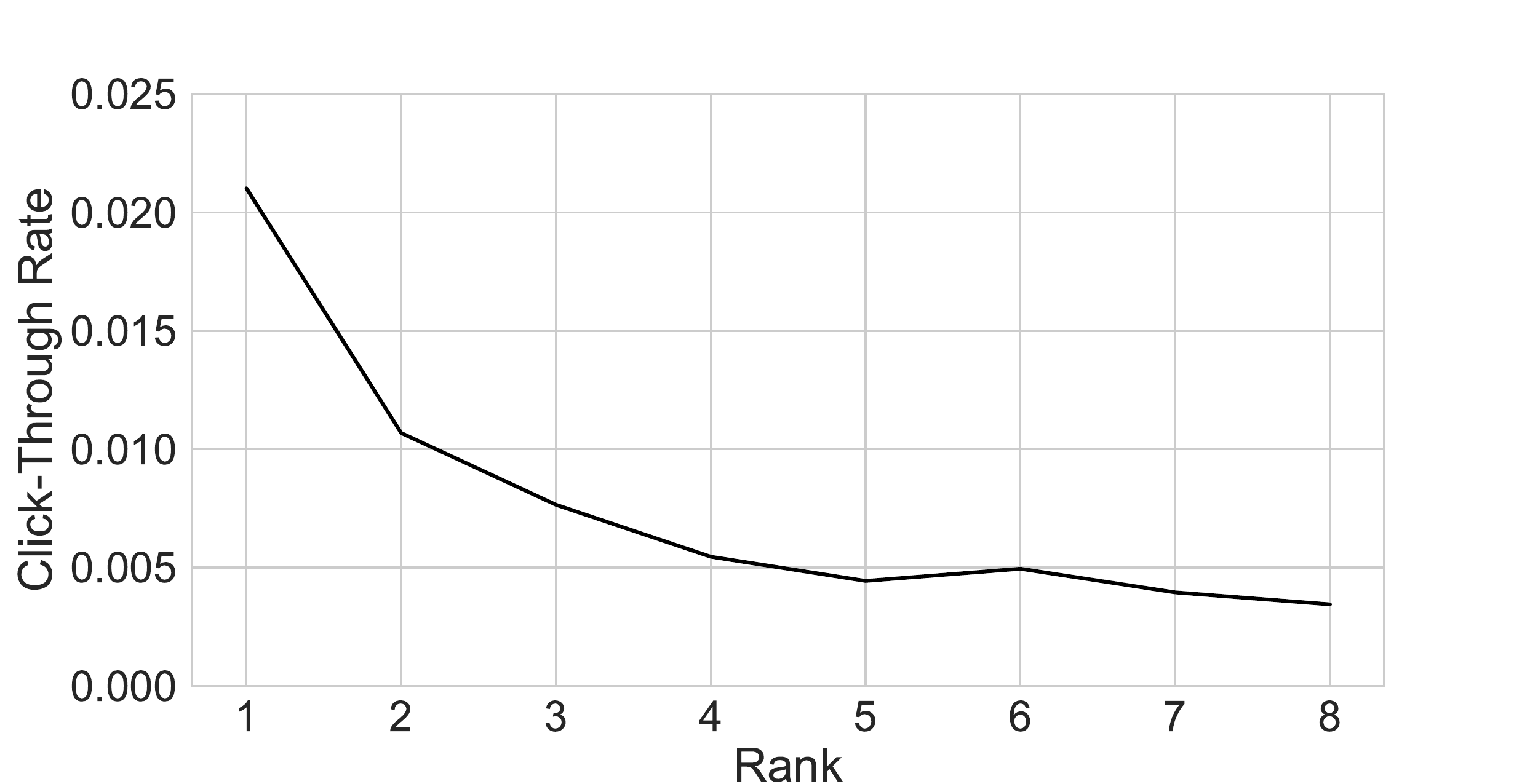}
  \caption{The average click-through rate (CTR)  observed in the \emph{Yahoo!
  Search Auction} dataset, described in Section~\ref{sec:experimental}, can be
  observed to be an approximately decreasing and convex function of the slot
  rank. The confidence intervals were too small to be meaningfully reported in
  the figure.}\label{fig:convexity}
\end{figure}

In practice, ad auctions are also multi-item, used for selling more
than one ad position on a user's view. We now extend the previous
results to a multi-item setting, with $m$ items (or ``slots''). We
assume the common positional ad setting, where each slot has an
inherent click-through rate $pos_j$, which we can suppose is ordered:
$pos_1 > pos_2 > \cdots > pos_m$~\citep{varian2007position}.  Each
bidder $i$ is only ever allocated at most one item, with value $v_i$
for getting a \emph{click}. As a result, bidder $i$'s utility for
winning slot $j$ is $v_i \cdot pos_j - p_i$, where $p_i$ is the
required payment. We assume for simplicity that all bidders have the
same ad quality, and thus the same click-through rate for a given ad
slot.

The Vickrey-Clarke-Groves (VCG) auction  takes place in two parts.  First, a
value-maximising allocation is chosen (based on bids). Here, the highest bids
win the highest slots. Bidders are then charged the externality they impose on
all other bidders. In other words, assuming that bidder $k$ obtains the $k^{th}$
slot, bidder $k$ pays:
\begin{equation*}
  p_k = \sum_{j=k+1}^m (pos_{j-1} - pos_j)\cdot v_j \cdot \mathbbm{1}_{[v_j \geq
  r_j]}
\end{equation*}
where $r_j$ is the reserve imposed on bidder $j$ with value $v_j$.  We can prove
that the self-excitation property holds under a convexity assumption.
\begin{theorem}\label{thm:vcg}
Consider a set of rational agents with no budget-constraints.  Let the outcome
  of interest be each agent's welfare. The interference mechanism of a reserve
  price experiment, assigning treated units to a higher personalized reserve
  price, for a VCG auction in the positional ad setting with no quality effects
  is self-exciting, and thus monotone if the click-through rate function $pos: i
  \mapsto pos_i$ is convex: \begin{equation*} \forall i > j,~pos_{i+1} - pos_i
  \leq pos_{j+1} - pos_j, \end{equation*}
\end{theorem}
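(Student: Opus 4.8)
The plan is to verify the self-excitation property of Proposition~\ref{prop:more} directly, and in fact to establish the two defining inequalities pointwise in $\Z$, exactly as was done for the single-item case in Theorem~\ref{thm:second_price}; monotonicity then follows at once. The key starting observation is that a competitor's reserve $r_j$ enters the VCG payment formula only through the indicator $\mathbbm{1}[v_j \ge r_j]$, so raising $r_j$ either leaves the auction unchanged (if $v_j$ still clears) or simply \emph{deletes} bidder $j$ from the auction (if $v_j$ now falls below the new reserve). Consequently, comparing $\Z=\vec 0$ with any $\Z'$ such that $Z'_i=0$ amounts to deleting some set of $i$'s competitors, while $i$'s own slot-eligibility -- which depends only on $r_i$ -- is unchanged (and if $v_i<r_i$ then $Y_i=0$ under both and the inequality is trivial); symmetrically, comparing $\vec 1$ with any $\Z'$ such that $Z'_i=1$ amounts to \emph{adding} a set of competitors. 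So it suffices to prove one lemma: deleting a single competitor can only weakly increase bidder $i$'s VCG utility (equivalently, adding one can only weakly decrease it).

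To prove the lemma I would use the marginal-contribution representation of VCG: bidder $i$'s utility equals $\mathrm{OPT}(S)-\mathrm{OPT}(S\setminus\{i\})$, where $S$ is the set of bidders clearing their reserves and $\mathrm{OPT}(\cdot)$ is the welfare of the value-maximizing assignment of a bidder set to the slots (highest values matched to highest $pos_j$, by the rearrangement inequality). The lemma is then equivalent to $S\mapsto\mathrm{OPT}(S)$ having non-increasing marginal values, i.e.\ being submodular, so that $i$'s marginal contribution shrinks as the competitor pool grows. I would establish this by the explicit bookkeeping of how slot assignments shift when a bidder is removed, splitting into two cases. If the deleted competitor $\ell$ was ranked below $i$, then $i$'s slot is unchanged and only $i$'s payment moves, by a telescoping sum of the weight differences $pos_{t-1}-pos_t$. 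If $\ell$ was ranked above $i$, then $i$ advances one slot, gaining $v_i(pos_{k-1}-pos_k)\ge 0$ in allocated value, and again the payment change is a telescoping sum of weight differences. In each case the convexity hypothesis of the theorem -- equivalently, that the increments $pos_{t-1}-pos_t$ are non-increasing in $t$, i.e.\ $pos_{t-2}-2\,pos_{t-1}+pos_t\le 0$ -- forces the payment change in the favorable direction, which together with the non-negative value gain gives the lemma. Chaining the one-step comparisons then yields the two self-excitation inequalities for every $\Z$, and Proposition~\ref{prop:more} gives self-excitation and hence monotonicity.

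The main obstacle I anticipate is precisely this last bookkeeping: tracking how the ranks of the bidders below $i$ shift and how the reserve indicator truncates the payment sum, and in particular the boundary case in which the number of surviving bidders crosses the number of slots $m$, so that a previously unserved bidder is pulled into the last slot (in effect appending a $pos=0$ entry to the sequence). One must check that the convexity bound still points the right way there -- or, alternatively, arrange the telescoping (Abel) summation so that only monotonicity of the $pos$ sequence and sortedness of the values are used, which sidesteps the boundary entirely. Everything else -- the reduction of reserve changes to additions/deletions of competitors, and the passage from the pointwise inequalities to self-excitation and monotonicity via Proposition~\ref{prop:more} -- is routine.
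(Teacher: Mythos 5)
Your proposal is essentially the paper's own argument: reduce a reserve increase on a competitor to the deletion of that competitor's bid, compare assignments differing in one such bidder with $i$'s status fixed, split on whether the discarded bidder is ranked above or below $i$, control the resulting telescoping sums of consecutive $pos$ differences via the convexity hypothesis, and then chain the one-step inequalities and invoke Proposition~\ref{prop:more}; the marginal-contribution/submodularity framing is only packaging, since you prove the key lemma by the same slot-shift bookkeeping. One small caveat: in the case where the deleted competitor is ranked above $i$, convexity makes the payment move \emph{against} $i$ (the externality terms $pos_{k-2}-pos_{k-1}$ exceed $pos_{k-1}-pos_k$), so the lemma there requires showing the value gain $v_i(pos_{i-1}-pos_i)$ dominates via the telescoping bound and the sortedness $v_k\leq v_i$ --- exactly what your Abel-summation remark supplies, so the plan goes through.
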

This convexity assumption is verified empirically in the literature and in the
Yahoo!  auction dataset\footnote{Our own dataset could potentially suffer from
endogeneity, where weaker bidders are consistently assigned to lower slots. The
assumption is, however, supported elsewhere in the
literature~\citep{brooks2004atlas, richardson2007predicting}.}
introduced in Section~\ref{sec:experimental} (cf.
Figure~\ref{fig:convexity}). The intuition behind the proof is similar: the
greater the number of my competitors are treated, the fewer are able to compete,
and thus the higher my utility. We prove this through a case analysis.
Let $r^\Z_k$ be the reserve that bidder $k$ faces under assignment vector $Z$:
$r^\Z_k = r_k$ if $Z_k = 0$ and $r'_k$ otherwise.

\begin{proof}
    Consider the outcomes of bidder $i$ and $j$ under $\Z$ and $\Z'$ such that
    for all $k \neq j$, $Z_k = Z_k'$, $Z_i = Z_i' = 0$, and $Z_j = 0 < Z_j' =
    1$. By transitivity, if we can show $Y_i(Z) \leq Y_i(Z')$, then it follows
    that $Y_i(\vec 0) \leq \E_\calC[Y_i(\Z) : Z_i = 0]$. There are three
    possible cases:
    \begin{itemize}
      \item The allocation of bidders to slots does not change and thus prices
        do not change. Bidder i's utility is constant.
      \item Bidder $i$ is allocated to slot $i$ for both $\Z$ and $\Z'$
        assignments, but bidder $j$'s ($j < i$) bid is discarded  when $j$ is
        treated ($Z'$): $r_j' > v_j > r_j$. The difference of bidder $i$'s
        outcome under the two treatment assignments is:
          $Y_i(\Z) - Y_i(\Z') = - \sum_{k \geq j} (pos_{k -1} - pos_k) (v_k
          \mathbbm{1}_{v_k > r^\Z_k} - v_{k+1} \mathbbm{1}_{v_{k+1} >
          r^\Z_{k+1}})$.
        This quantity is always negative, hence $Y_i(\Z) \leq Y_i(\Z')$.
      \item Bidder $j$'s ($j < i$) bid is discarded when $j$ is treated and thus
        bidder $i$ is allocated to slot $i-1$.  In that case, bidder $i$'s
        utility under $\Z$ is:
          $Y_i(\Z) = pos_i v_i - \sum_{k \geq i+1} (pos_{k-1} - pos_k) v_k
          \mathbbm{1}_{v_k > r^Z_k}$. The same bidder $i's$ utility under $\Z'$
          is: $Y_i(\Z') =  pos_{i-1} v_i - \sum_{k \geq i+1} (pos_{k-2} - pos_k)
          v_k \mathbbm{1}_{v_k > r^Z_k}$.

        It follows that the difference of bidder $i$'s outcomes is equal to:
         \begin{align*}
           Y_i(\Z)  - Y_i(\Z')  & =   (pos_i - pos_{i-1}) v_i \\
           & \quad - \sum_{k \geq i+1} (pos_{k-2} + pos_k - 2
          pos_{k-1}) v_k,
         \end{align*}
        where the $\mathbbm{1}_{v_k > r^Z_k}$ terms are implicit. Note that each
        individual term of the sum is positive by convexity, such that $Y_i(\Z)
        \leq Y_i(\Z')$.
    \end{itemize}
\end{proof}

%

\section{Experimental Data and Validation}
\label{sec:experimental}

In this section, we validate our design strategy for comparing two
given graph partitions for the purpose of experimentation under
interference to an advertising auction dataset.
%
For this purpose,  we make use of  a Yahoo! auction
dataset.
%

\subsection{The Yahoo! Search Auction dataset}

{\small
\begin{table}
  \centering
  \begin{tabular}{l r}
  \begin{tabular}{lrl}
    Per keyphrase \\
    \midrule
     nbr of bids & min & 1 \\
     & median & 2 \\
     & max & 7041 \\
     bid value & min & $.3$\textcent\\
     & median & $66$\textcent\\
     & max & \$$320$\\
     impressions & min & 1 \\
     & median & 3 \\
     & max & $5 \cdot 10^6$ \\
     clicks & min & 0 \\
     & $cdf(1)$ & $91.4$ \\
     & max & 7041 \\
  \end{tabular} &
  \begin{tabular}{lrl}
    Per bidder \\
    \midrule
     nbr of bids & min & 1 \\
     & median & 9 \\
     & max & $2.1 \cdot 10^4$ \\
     bid value & min & $.5$\textcent\\
     & median & $60$\textcent\\
     & max & \$$4700$ \\
     impressions & min & $1$ \\
     & median & $31$ \\
     & max & $1.4 \cdot 10^6$ \\
     clicks & min & $0$ \\
     & $cdf(1)$ & $93.3$ \\
     & max & $1.1 \cdot 10^4$\\
  \end{tabular}
  \end{tabular}
  \caption{Summary statistics for the Yahoo! dataset, aggregated by keyphrase
  or by bidder,  per day for the entire 4 month period. Bid values are given in
  USD unless specified otherwise.  $cdf(1)$ is the value of the cumulative
  distribution function of impressions for a single
  impression.}\label{tab:summary}
\end{table}
}  

The \emph{Yahoo! Search Marketing Advertiser Bid-Impression-Click data on
competing Keywords} dataset is a publicly-available dataset released by
Yahoo!\footnote{Available for download at
\url{https://webscope.sandbox.yahoo.com/}}, containing bid, impression, click,
and revenue data between advertiser-keyphrase pairs over a period of 4 months.
The advertiser and keyphrase are anonymized, represented as a randomly-chosen
string. A sample line of the dataset is reproduced\footnote{The account ID and
keyword ID's have been shortened for the sake of exposition in this sample line.
The bid value is given in 1/100\textcent.} below:

\begin{table}[h]
  \centering
  \begin{tabular}{ccccccc}
  day & id & rank & keyphrase & bid & impress. & clicks \\
    1 & \texttt{a3d2}& 2 & \texttt{f3e4,j6r3,}\dots & 100.0 & 1.0 & 0.0
  \end{tabular}
  \caption{Sample line in the Yahoo! dataset}
\end{table}

The dataset contains $77,850,272$ bidding activities of $16,268$ different
bidders. There are a total of $75,359$ keywords represented, for a total of
unique $648,515$ keyphrases (or list of keywords). Table~\ref{tab:summary}
contains a series of summary statistics computed over keyphrase-day pairs and
bidder-day pairs, namely the total number of bids, the total bid value, the
total number of impressions, and the total number of clicks per keyword (or per
bidder) and per day.

We can represent the \emph{Yahoo!} dataset by a set of bipartite graphs between
bidders, identified by their \texttt{account\_id}, and the keyphrases. The
\emph{bid} bipartite graph on day $t$ draws a weighted edge of weight $w_{ij}$
between every bidder-keyphrase pair such that bidder $i$ bids $w_{ij}$ on
keyphrase $j$ on day $t$. We can aggregate these graphs over the entire time
period ($4$ months) by summing their edge weights together. We can also consider
the impression, rank, and clicks graphs, where the weight of the edge is given
by the number of impressions, the rank, or the number of clicks respectively
received by bidder $i$ on keyphrase $j$.

The dataset only provides data aggregated at the granularity of a single day,
reporting the average bid and total number of impressions and clicks for each
bidder, keyphrase day triplet. Hence, we define a keyphrase-day pair as
a single auction, where each bidder's bid is set to the reported average bid for
that keyphrase-day pair.  For the sake of simplicity, we will only
consider a setting with the first $4$ ad positions, which account for the
majority of clicks.

\subsection{Simulating a reserve price experiment}
\label{sec:bipartite}

\begin{figure}
  \centering
  {\small
  \begin{tabular}{cc}
    \includegraphics[scale=.5]{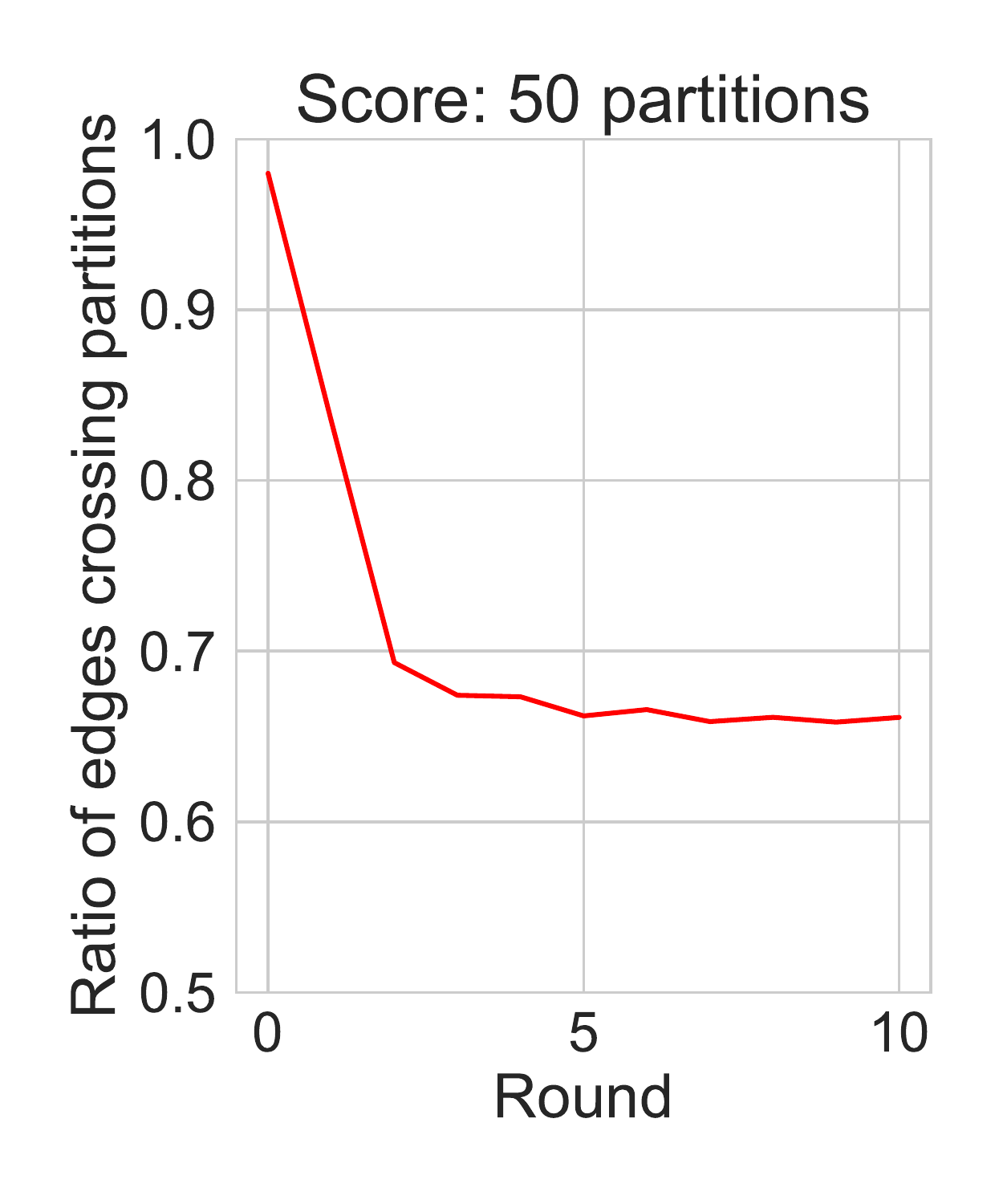} &
    \includegraphics[scale=.5]{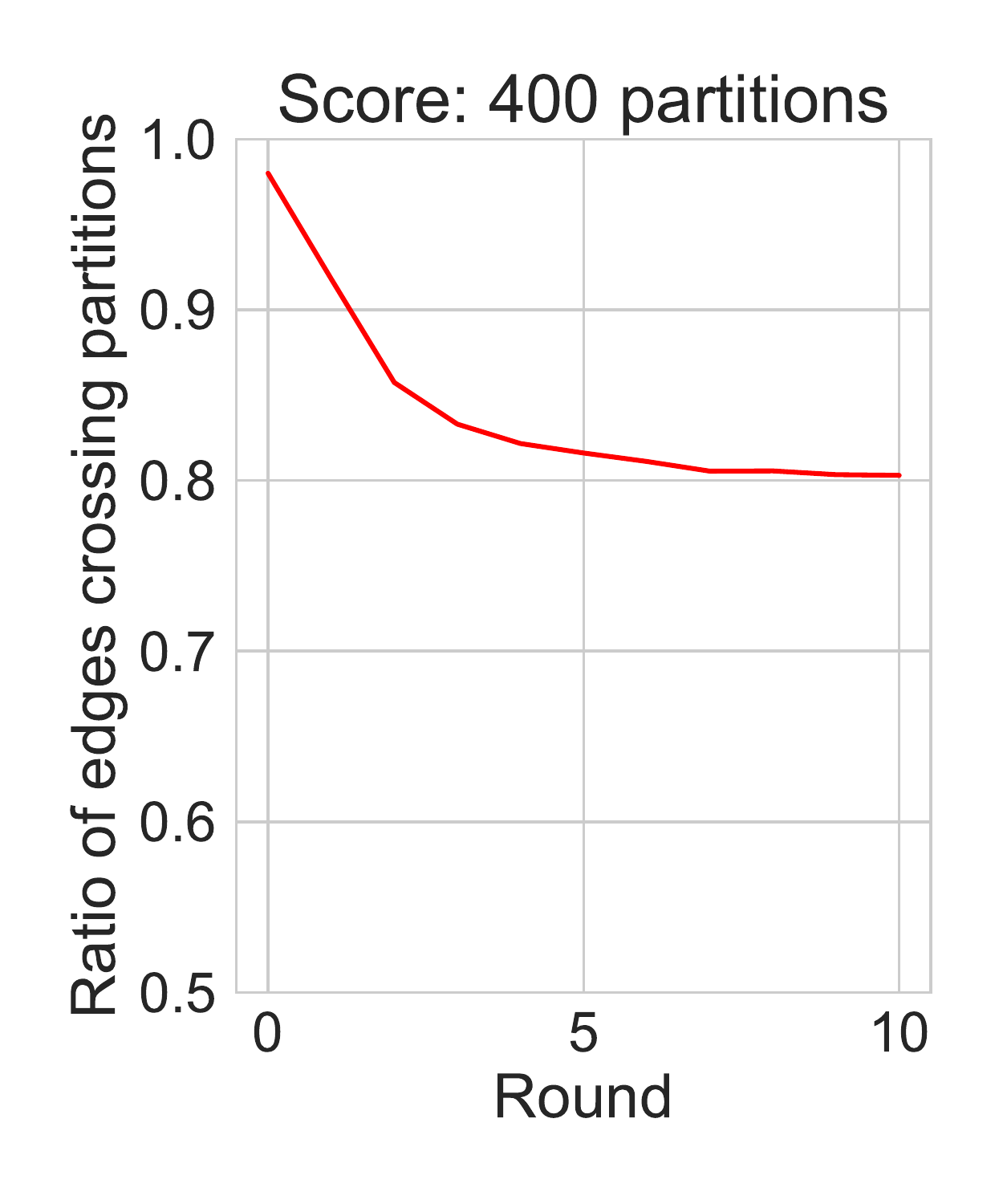} \\
  \end{tabular}
  } 
  \caption{Weighted ratio of edges across partitions for successive runs of the
  R-LDG algorithm on the weighted bid graph into $50$ partitions and $400$
  partitions respectively.}\label{fig:cut_over_time}
\end{figure}

While the \emph{Yahoo! Search Auction} dataset provides us with a set of
bidders, keyphrases, and the bids, impressions, and clicks that link them, it
does not provide us with an actual intervention on the auction ecosystem. We
must therefore simulate the impact of a change in the reserve price given to
each bidder.

While many possible units of randomization exist for an auction experiment
(keyphrases, bidders, browsers, users, various pairings of these units, etc.),
the reserve price experiment we consider randomizes on bidders. On large auction
platforms, the reserve price might be set
through the application of machine learning methods.  In our context,  we choose
a random non-zero reserve price for each bidder, calibrating the spread of the
distribution such that some bidders will not always be able to match the reserve
price for all auctions.  All bidders assigned to the intervention will face
their non-zero reserve price, fixed for every auction for simplicity.  All
bidders assigned to the control bucket will not face a reserve price.

 Within the same auction for a given
keyphrase, two participating bidders may face distinct reserves and be assigned
to different treatment buckets. A bidder-cluster-based randomized experiment is
thus used to mitigate the possible interference between bidders, our units of
randomization, within a single auction.

To validate our experiment-of-experiments design, we must find candidate
balanced graph partitions to compare, a problem known to be NP-hard --- even
when we slightly relax the balancedness assumption~\citep{andreev2006balanced}.
In the last several years, there has been good progress in developing scalable
distributed balanced partitioning algorithms for graphs with billions of
edges~\citep{TGRV14,ABM16}. These algorithms have enabled practitioners to apply
these large-scale graph mining algorithms for large-scale randomized
experimental studies~\citep{ugander2013balanced,saveski2017detecting,RAKMN16}. Of
the numerous heuristic algorithms for finding such partitions, the {\em
Restreaming Linear Deterministic Greedy} (R-LDG)
algorithm~\citep{nishimura2013restreaming} is a popular choice.  It consists of
repeatedly applying a greedy algorithm, originally proposed
in~\citep{stanton2012streaming}, which assigns each node $u$ to one of $k$
partitions according to the following objective:
\begin{equation*}
  \argmax_{i \in \{1, \dots k\}} |P_i^t \cap N(u) | \left(1 -
  \frac{|P_i^t|}{H_i} \right)
\end{equation*}
where $P_i^t$ is the set of nodes assigned to partition $i$ at step $t$ of the
algorithm, $H_i$ is the maximum capacity of partition $i \in \{1, \dots k\}$,
and $N(u)$ is the set of neigbhors of node $u$ in the graph.

We can apply this clustering algorithm to any of the bipartite graphs introduced
in Section~\ref{sec:bipartite}, aggregated over the entire time period,
resulting in a set of mixed bidder-keyphrase clusters. The bidder-only clusters
are obtained from the previous clustering by simpling removing the keyphrase
nodes from consideration.  The algorithm's objective must be slightly modifed to
accomodate weighted graphs, by replacing $|P_i^t \cap N(u)|$ with $\sum_{i, j}
w_{ij} \mathbbm{1}_{i \in N(u)} \mathbbm{1}_{j \in P_i^t}$. Furthermore, we must
also modify the balance requirement, since only the bidder side of the bipartite
graph clustering is required to be balanced!  We therefore replace $\left(1 -
|P_i^t|/H_i \right)$ with $\left(1 - |P_{i,c}^t|/H_{i,c} \right)$ where $P_{i,
c}^t$ is the set of bidder nodes in partition $P_i^t$ and $H_{i,c}$ is
the maximum number of allowed bidder nodes in partition $P_i^t$.  The
final objective is given by:
\begin{equation*}
  \argmax_{i \in \{1, \dots k\}} \left| \sum_{i \in N(u), j \in P_i^t} w_{ij}
  \right| \left(1 - \frac{|P_{i, c}^t|}{H_{i, c}} \right)
\end{equation*}
Figure~\ref{fig:cut_over_time} plots the proportion of edges cut, weighted by
the bid amount, over consecutive runs of the R-LDG algorithm for $50$ and $100$
clusters. We adopt three main vectors of comparison between
candidates partitions to determine the efficacy of our proposed
experiment-of-experiment design:
\begin{itemize}
  \item \emph{Quality:} comparing partitions of the graph
    that differ in their estimated quality, for example by looking at the number
    of edges cut, for a fixed number of clusters. As an extreme
    example, we will compare a random graph partitioning to a partitioning
    obtained by running the R-LDG algorithm to convergence.
  \item \emph{Number of partitions:} comparing two partitions of the
    graph obtained by running the same clustering algorithm for a different
    number of partitions. As an example, we will consider a R-LDG clustering
    with $10$ clusters and a R-LDG clustering with $400$ clusters.
  \item \emph{Metric:} comparing partitions of the graph that are
    obtained by applying the same algorithm on different bipartite graphs. As an
    example, we will compare a R-LDG clustering of the \emph{bid} graph with an
    R-LDG clustering of the \emph{impressions} graph.
\end{itemize}

The dataset does not provide the budgets of the bidders or their
perceived ad quality, hence we will adopt the same simplifying
assumptions as Section~\ref{sec:application} of no quality effects
between bidders and no budget constraints. Furthermore, we assume bids are
unchanged as a result of the experiment (which would be valid for
rational, non budget-limited bidders).
%
%

\subsection{Validating the empirical optimization}
\label{subsec:validating}

\begin{figure*}
  \centering
  \begin{tabular}{c}
    \includegraphics[scale=.3]{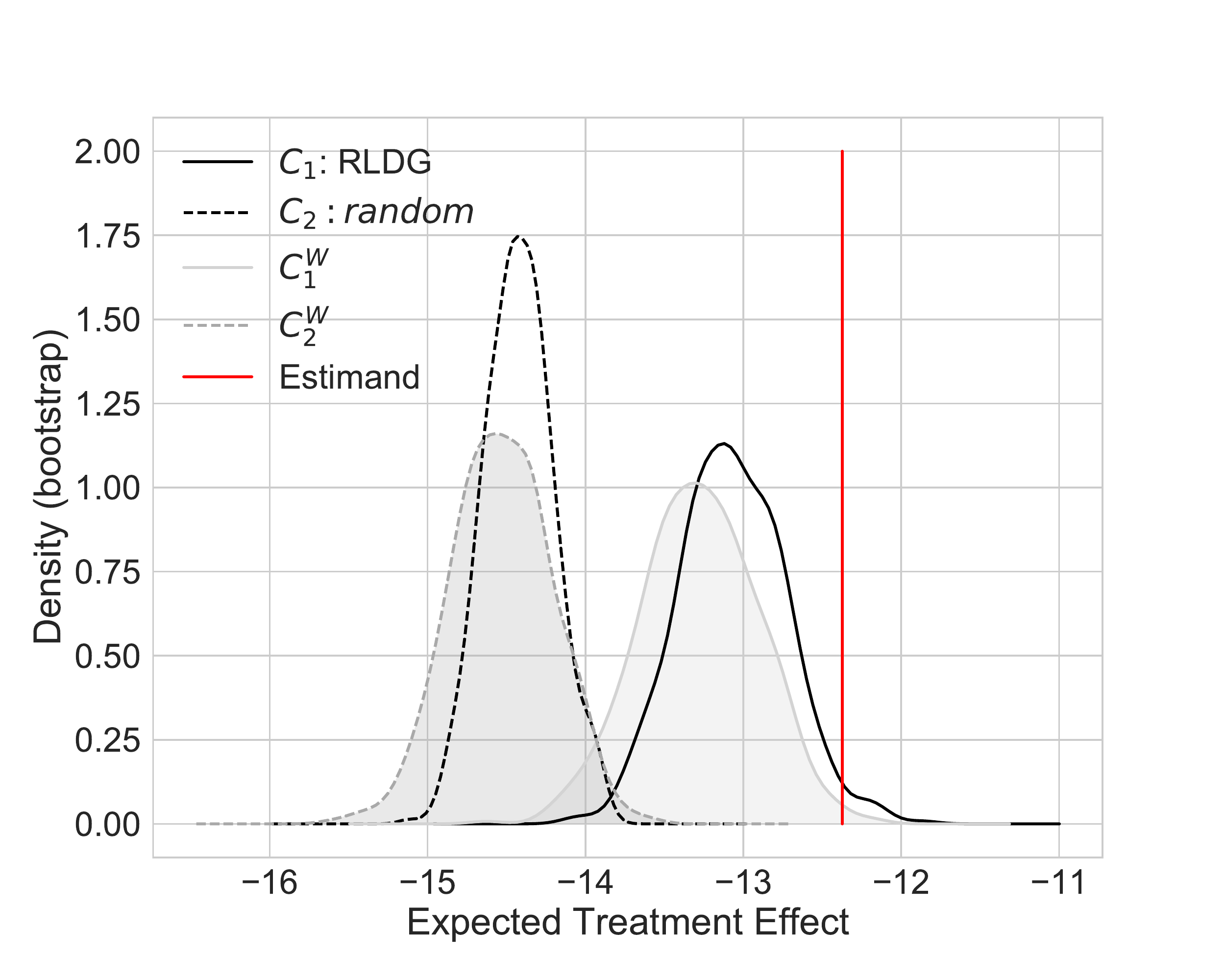} \\
    \includegraphics[scale=.3]{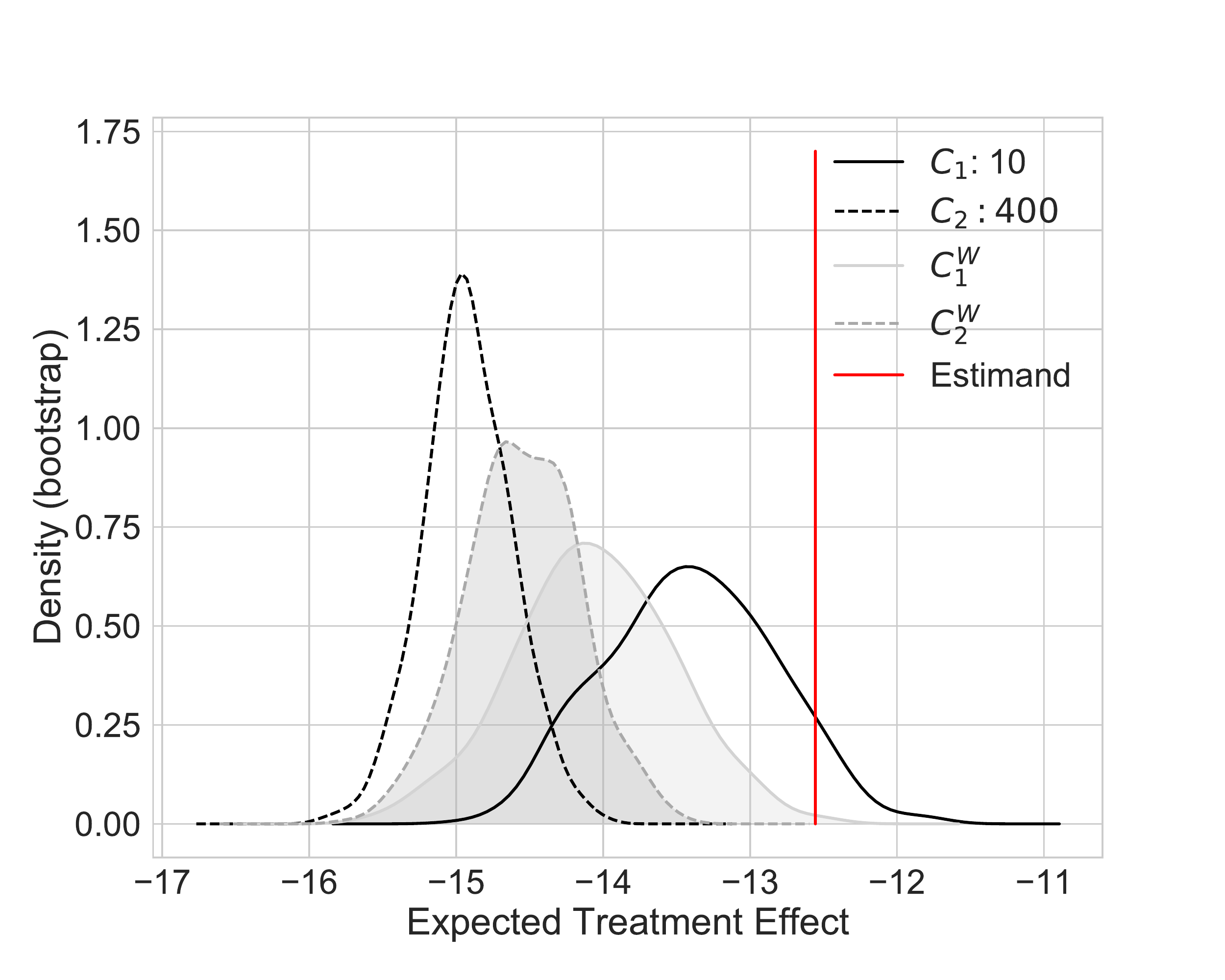}
    \\
    \includegraphics[scale=.3]{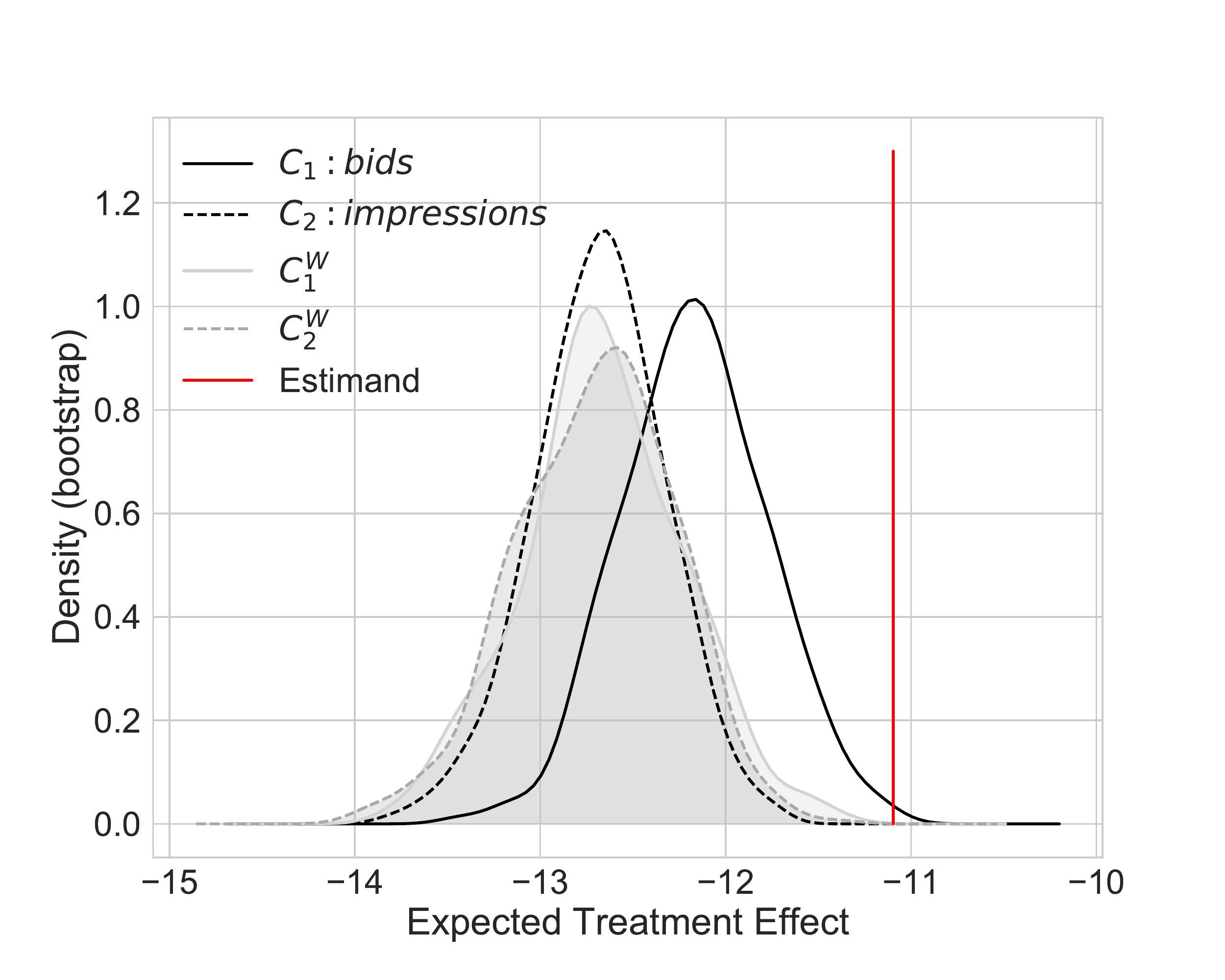} \\
  \end{tabular}
  \caption{Distribution of the expectation of the HT estimator under $\calC_1$
  and $\calC_2$, and the induced clusterings $\calC_1^\W$ and $\calC_2^\W$. The
  red segment represents the total treatment effect estimand.  \emph{(Top)}
  $\calC_1$ is a R-LDG clustering, $\calC_2$ is a random clustering ($M_1 = M_2
  = 50$).  \emph{(Middle)} $\calC_1$ is a R-LDG clustering into $10$ partitions,
  $\calC_2$ is a R-LDG clustering into $400$ partitions. \emph{(Bottom)}
  $\calC_1$ is a R-LDG clustering of the bid graph, whereas $\calC_2$ is a R-LDG
  clustering of the impressions graph. ($M_1 = M_2 = 50$)}\label{fig:results}
\end{figure*}

We first compare a partitioning of the graph obtained by running the modified
R-LDG algorithm (cf.~Section~\ref{sec:bipartite}) against a completely random
balanced partitioning of the graph.  We fix a subset of auctions with few
bidders per auction, in order to showcase the framework and establish the
monotonicity and transitivity properties by allowing a setting for which there
is a clear difference between the two clusterings.  The reduction in cut size
--- measured by the ratio of the weighted sum of edges inter-clusters over the
sum of all edge weights --- over the iterations of the algorithm is shown in
Figure~\ref{fig:cut_over_time}.  While the weighted cut of the graph for a
random partition is around $98\%$, the partition obtained with the R-LDG
algorithm approaches $66\%$ within a few iterations.

We validate the monotonicity assumption, as well as the transitivity
assumption, for reserve price experiments. In Figure~\ref{fig:results} (a), we
plot four distributions as well as the Total Treatment Effect estimand (cf.
Eq.~\ref{eq:tte}), obtained by taking the difference between assigning all units
to a higher reserve price and assigning none. Namely, we plot the distribution
of the HT estimator's expectation (cf. Eq~\ref{eq:HT}) under each cluster-based
design: $\E_{\Z \sim \calC_k}[\hat{\tau}]$ where $k = 1$ for the R-LDG
clustering and $k = 2$ for the random clustering. We also plot the distribution
of the expectation of the experiment-of-experiments (EoE) estimators: $\E_{\W,
\Z \sim \calC^\W_k}[\hat{\tau}_k^\W]$.

We find that they all under-estimate the true treatment effect,
as expected
from the  $\calP$-increasing property. As expected, the HT estimator is more
biased under a random clustering than under the R-LDG clustering. Furthermore,
we find that the property of transitivity holds
(cf.~Eq.~\ref{prop:transitivity}), namely the EoE estimate of the ``random
estimator'' also under-estimates the total treatment effect more severely than
the EoE estimate of the ``R-LDG estimator''.

We repeat the experiment to compare a R-LDG clustering with $10$ partitions
with another R-LDG clustering with $400$ partitions (cf.
Figure~\ref{fig:results} (b)). We find that the clustering with $10$ partitions
is less biased but exhibits higher variance, and that the transitivity property
holds. Finally, in Figure~\ref{fig:results}~(c),  we compare a clustering of
the impressions bipartite graph with a clustering of the bid bipartite graph.
The transitivity property is again verified, and
moreover we see that clustering the bid bipartite
graph may be a better heuristic in this setting, but
the difference in the two
clusterings is very slight.  The code is available for download at
\url{https://jean.pouget-abadie.com/kdd2018code}.

\section{Discussion}

We showed that, under a certain monotonicity assumption, we can determine which
of two clusterings yields the least biased estimator by running an
experiment-of-experiments design. We noted that commonly-studied parametric
models of interference verify this monotonicity assumption. Moreover, we proved
that the interference mechanism resulting from the impact of a reserve price
experiment on social utility is monotone, and hence our framework applies.
Finally, we validated our framework on a simulated reserve price experiment,
grounded in a publicly-available Yahoo! search ad dataset. There are several
questions worth investigating that we did not tackle in this paper. Notably,
while we explored the case of rational bidders participating in positional ad
auctions without budget constraints or quality effects to establish
monotonicity, can these assumptions be relaxed or generalized? What other kinds
of experiments are monotone (or self-exciting)? Is it possible to generalize
Theorem~\ref{thm:vcg} to other Vickrey-Clacke-Groves auctions, \emph{Generalized
Second Price} auctions, or budgeted bidders?  Finally, can the monotonicity
assumption be validated empirically, either through an experimental design or an
observational data study? It seems randomized saturation
designs~\citep{baird2014designing} would be a good place to start for testing
monotonicity experimentally. Finally, our framework relied on the transitivity
of the experiment-of-experiment estimators: namely, that they conserved the
ordering of the expectation of the estimators under each clustering. Whilst we
validated this assumption either
theoretically~(cf.~Prop.~\ref{prop:linear_transitive}) or through
simulation~(cf.~Section.~\ref{subsec:validating}), can we characterize the
clustering-experiment pairs that are transitive and can the assumption be tested
empirically?

\appendix
\section{Proofs}
\label{sec:proofs}

\subsection{Proof of Proposition~\ref{prop:simple_linear_monotone}
and~\ref{prop:linear_monotone}}

Assume that $\forall \Z,~Y_i(\Z) = \alpha_i + \beta_i \cdot Z_i + \gamma_i
\frac{1}{|\N_i|} \sum_{j \in \N_i} Z_j + \epsilon_i$, where $\epsilon_i \sim
\N(0, \sigma^2)$.  Recall the definition of the estimand:
$\tau = \frac{1}{N} \sum_i Y_i(\vec 1) - Y_i(\vec 0)$. Plugging in the expression
for $Y_i(\vec Z)$, we obtain: $\tau = \frac{1}{N} \sum_i \beta_i + \frac{1}{N}
\sum_i \gamma_i$.
The estimator is given by: $\hat \tau = \frac{M}{N} \sum_i
\frac{{(-1)}^{1 - Z_i}}{m_t^{Z_i} m_c^{(1 - Z_i)}} Y_i(\Z)$, where $m_t$
(resp.~$m_c$) is the number of clusters in treatment (resp.~control). Plugging
in the expression for $Y_i(\vec Z)$, we obtain:
\begin{equation*}
  \E_{Z \sim \calC}[\hat \tau] = \frac{1}{N} \sum_i \beta_i + \frac{1}{N} \sum_i
  \gamma_i \left( \frac{|\N_i \cap C(i)|}{|\N_i|}  - \frac{1}{M-1} \frac{|\N_i
  \backslash C(i)|}{|\N_i|} \right)
\end{equation*}
We obtain the desired result by taking the difference between these quantities.
Prop.~\ref{prop:simple_linear_monotone} follows by substituting $\gamma_i =
\gamma$.

\subsection{Proof of Proposition~\ref{prop:more}}
The proposition can be established by rewritting the definition of
$\calP$-increasing interference mechanisms,
\begin{align*}
  \tau - \E_{\Z \sim \calC}[\hat \tau] =\frac{1}{N} \sum_i & \left( Y_i(\vec 1) -
    \E_{\Z\sim\calC}[Y_i(\Z) | z_{C(i)} = 1] \right) \\
    & + \left( \E_{\Z\sim\calC}[Y_i(\Z) | z_{C(i)} = 0] - Y_i(\vec 0) \right),
\end{align*}
such that a sufficient condition of the model to be $\calP$-increasing is for $
Y_i(\vec 1) > \E_{\Z\sim\calC}[Y_i(\Z) | z_{C(i)} = 1]$ and $Y_i(\vec 0) <
\E_{\Z\sim\calC}[Y_i(\Z) | z_{C(i)} = 0]$. If increasing the number of treated
units in that unit's neighborhood increases that unit's outcome --- holding that
unit's treatment assignment constant --- then the two previous inequalities
hold.

\subsection{Proof of Proposition~\ref{prop:linear_transitive}}
Recall that for $k \in \{1, 2\}$, our estimator can be written as:
\begin{equation*}
  \hat{\tau}^\W_k = \frac{M_k}{N_k} \sum_i W_i Y_i(\Z) \frac{{(-1)}^{1 -
  Z_i}}{M^{Z_i}_{k,t} M^{1 - Z_i}_{k,c}},
\end{equation*}
where $M_{k,t}$ (resp.~$M_{k,c}$) is the number of treated (resp.~control)
clusters in design arm $k$ and $N_k$ is the number of units in design arm $k$.
We begin by first considering the no-interference case.
We have that $\E_{Z \sim C_k^\W}[\hat \tau_k | \W ] = \frac{1}{N_k} \sum_i W_i
(Y_i(1) - Y_i(0))$. By the law of iterated expectations, we have $\E_{\W, Z \sim
C_k^\W}[\hat{\tau}^\W_k] = \tau$.

We now consider the linear model suggested in Eq.~\ref{eq:linear}, where
we assume heterogeneous network effects ($\gamma_i$).
From the proof of Proposition~\ref{prop:linear_monotone}, we have that
\begin{equation*}
  \E_{\Z \sim \calC_k^\W}[\hat{\tau}^\W_k | \W] = \bar \beta +
  \frac{M_k}{M_k-1}\frac{1}{N_k} \sum_i W_i \gamma_i \left( \theta_{C_k^\W,
  i} - 1 \right)
\end{equation*}
Note that we have $\E_\W[W_i \theta_{C_k^\W, i}] = \frac{N_k (N_k - 1)}{N (N -
1)} \theta_{C_k, i}$. It follows that, if $M_1 >> 1$,~$M_2 >> 1$, and $N_1 =
N_2 = \frac{N}{2}$,
\begin{align*}
  \E_{\W, \Z \sim \calC_1^\W}[\hat{\tau}^\W_1]- \E_{\W, \Z \sim \calC_2^\W}[
    \hat{\tau}^\W_2] & \approx  \frac{1}{2N} \sum_i \gamma_i \theta_i
  \\
  & \approx \E_{\Z \sim \calC_1}[\hat \tau]- \E_{\Z \sim \calC_2}[\hat \tau]
\end{align*}
We conclude that the linear model of interference is transitive.

\subsection{Discussion for Proposition~\ref{prop:statistical_test}}
Under unspecified models of interference, theoretical bounds on the power of
even the simplest randomized experiment are hard to come by. While the joint
assumption of monotonicity and transitivity allow us to design a sensible test
for detecting the better of two partitions, they are not sufficient to
bound its power without stronger assumptions. We thus rely on simulations, like
the ones run in Section~\ref{sec:experimental}, or theoretical approximations,
like the ones suggested in Prop.~\ref{prop:statistical_test}. It approximates
$\E_{\W, \Z}[\hat{\tau}_k^\W]$, for $k \in \{1, 2\}$ by two
independently-distributed Gaussian variables of mean $\hat{\tau}_k^\W$ and
variance $\hat{\sigma}_k^\W$, given in Eq.~\ref{eq:neymann}.
Their difference therefore has the distribution $\N(\hat{\tau}_1^\W -
\hat{\tau}_2^\W, \hat{\sigma}_1^\W + \hat{\sigma}_2^\W)$.  Recall that Neymann's
variance estimator is an upper-bound of the true variance, under SUTVA, in
expectation over the assignment $\Z$ (cf.~\citep{imbens2015causal}). We prove in
the lemma below that this still holds true for a hierarchical assignment.
\begin{lemma} Under SUTVA, Neymann's variance estimator is an upper-bound in
  expectation of the true variance of the HT estimator:
  \begin{equation*}
    \E_{\W,\Z}[\hat{\sigma}_k^\W] \geq var_{\W, \Z}[\hat{\tau}_k^\W]
  \end{equation*}
\end{lemma}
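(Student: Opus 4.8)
Fix $k\in\{1,2\}$; by symmetry it suffices to treat one design arm. The strategy is to condition on the first-stage assignment $\W$, apply the classical single-stage Neyman bound at the cluster level, and then pay for the extra randomness of $\W$ with the law of total variance. Conditioning on $\W$, Algorithm~\ref{alg:hier} performs inside arm $k$ an ordinary $\calC_k^\W$-cluster-based randomized design: the $M_k$ clusters of $\calC_k^\W$ are split completely at random into treated and control clusters, and $\hat\tau_k^\W$ is a fixed multiple of a difference in means of the cluster totals $Y'_{j,k}$. Under SUTVA this is exactly Neyman's classical setting with ``units'' replaced by the $M_k$ clusters, so (see~\citep{imbens2015causal})
\[
  \E_{\Z\mid\W}\!\left[\hat\sigma_k^\W\right]=\mathrm{var}_{\Z\mid\W}\!\left[\hat\tau_k^\W\right]+B_k(\W),
\]
where $B_k(\W)\ge 0$ is the usual non-identifiable term: up to the estimator's scaling it equals $\tfrac{1}{M_k(M_k-1)}\sum_{j=1}^{M_k}(T_j-\bar T_k)^2$, with $T_j=\sum_{i\in\calC_k^\W(j)}(Y_i(\vec 1)-Y_i(\vec 0))$ the treatment effect of sub-cluster $j$ and $\bar T_k$ their mean. (I use here the paper's standing assumption that clusters are large enough that, with high probability, every sub-cluster is non-empty and $M_{k,t},M_{k,c}\ge 2$, so the sample variances are well defined.)

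Taking $\E_\W$ and using the tower property yields $\E_{\W,\Z}[\hat\sigma_k^\W]=\E_\W[\mathrm{var}_{\Z\mid\W}[\hat\tau_k^\W]]+\E_\W[B_k(\W)]$, whereas the law of total variance gives $\mathrm{var}_{\W,\Z}[\hat\tau_k^\W]=\E_\W[\mathrm{var}_{\Z\mid\W}[\hat\tau_k^\W]]+\mathrm{var}_\W(\E_{\Z\mid\W}[\hat\tau_k^\W])$. Subtracting, the lemma is equivalent to the single inequality
\[
  \E_\W\!\left[B_k(\W)\right]\ \ge\ \mathrm{var}_\W\!\left(\E_{\Z\mid\W}\!\left[\hat\tau_k^\W\right]\right).
\]
Since, under SUTVA, the $\calC_k^\W$-CBR HT estimator is conditionally unbiased for the average treatment effect over the arm-$k$ units (cf.\ the proof of Proposition~\ref{prop:linear_transitive}), the right-hand side equals $\mathrm{var}_\W\!\bigl(\tfrac{1}{N_k}\sum_{i:W_i=k}(Y_i(\vec 1)-Y_i(\vec 0))\bigr)$, i.e.\ the finite-population sampling variance of the individual treatment effects under the completely randomized first stage.

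It remains to show that the expected within-arm heterogeneity of the cluster-level treatment effects dominates this sampling variance, and this is the step I expect to be the main obstacle. The route I would take is to expand $\E_\W[B_k(\W)]$ as a quadratic form in the individual treatment effects $\{Y_i(\vec 1)-Y_i(\vec 0)\}$: the randomness in $\W$ governs both which original clusters, and how much of each, land in arm $k$, and a bookkeeping of the associated hypergeometric covariances should decompose $\E_\W[B_k(\W)]$ into a term exactly equal to $\mathrm{var}_\W(\E_{\Z\mid\W}[\hat\tau_k^\W])$ --- coming from the random thinning of each cluster --- plus a nonnegative term measuring the genuine between-original-cluster heterogeneity of the individual effects; the inequality (tight precisely when the original clustering is perfectly homogeneous in treatment effects) then follows. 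The genuinely delicate parts are carrying out this covariance accounting cleanly and, for a fully rigorous statement, disposing of the boundary cases (empty sub-clusters, odd $M_k$) that the paper treats only ``with high probability''. In the large-cluster regime used elsewhere in the paper these vanish and the inequality becomes transparent, since $B_k(\W)$ is then of order $\sum_j T_j^2/N_k^2$, which dominates the $O(1/N)$ sampling term outright.
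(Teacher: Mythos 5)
Your argument follows the same skeleton as the paper's proof: condition on $\W$, apply the classical Neyman result to the $\calC_k^\W$-cluster-based design so that $\E_{\Z\mid\W}[\hat\sigma_k^\W]=var_{\Z\mid\W}[\hat\tau_k^\W]+B_k(\W)$ with $B_k(\W)$ proportional to the heterogeneity of the sub-cluster treatment-effect totals, then use Eve's law to reduce the lemma to the single inequality $\E_\W[B_k(\W)]\ \ge\ var_\W\bigl(\E_{\Z\mid\W}[\hat\tau_k^\W]\bigr)$, whose right-hand side is the first-stage sampling variance $var(Y(\vec 1)-Y(\vec 0))/N$ of the arm-$k$ average treatment effect. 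Up to that reduction your proposal and the paper's proof coincide.

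The problem is that you stop exactly where the proof still has content. That final inequality is the heart of the lemma, and you only announce ``the route I would take'' --- an unexecuted expansion of $\E_\W[B_k(\W)]$ via hypergeometric covariances --- rather than carrying it out. The paper does close this step, albeit tersely, by reducing it to $var(Y'(\vec 1)-Y'(\vec 0))/N_k \ge var(Y(\vec 1)-Y(\vec 0))/N$ and invoking a Cauchy--Schwarz inequality for balanced clusters, $\sum_j \bigl(\sum_{i\in C_j} d_i\bigr)^2 \le \sum_j |C_j| \sum_{i\in C_j} d_i^2$ applied to the individual effects $d_i = Y_i(\vec 1)-Y_i(\vec 0)$. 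Your closing heuristic does not substitute for the missing computation: the claim that $B_k(\W)$ is of order $\sum_j T_j^2/N_k^2$ and therefore ``dominates the $O(1/N)$ sampling term outright'' fails precisely when individual treatment effects cancel within clusters, so that the cluster-level totals $T_j$ are near zero while $var(Y(\vec 1)-Y(\vec 0))$ is not; in that regime the two sides are of the same order, the inequality is close to tight, and the needed slack comes entirely from the thinning-induced variability of the sub-cluster totals under $\W$ --- i.e., from the very covariance bookkeeping you defer. As written, the proposal is therefore a correct reduction plus an unproven key inequality, not a complete proof.
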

\vspace{-1.1em}
\begin{proof}
  By Eve's law, $var_{\W, \Z}[\hat{\tau}_k^\W] = \E_\W[ var_{\Z \sim
  \calC_k^\W}[\hat{\tau_k^\W} | \W]] + var_\W[\E_{\Z \sim \calC_k^\W}
  [\hat{\tau}_k^\W]]$. From~\citep{imbens2015causal}, the first term can is equal
  to:
  \begin{equation*}
    \frac{M_k}{N_k} \left(\frac{var(Y'(1))}{M_{k,t}} +
    \frac{var(Y'(0))}{M_{k,c}} - \frac{var(Y'(1) - Y'(0))}{M_k} \right),
  \end{equation*}
  where $Y'_j(Z) = \sum_{i \in \calC_k^\W(j)} Y_i(Z)$, the cluster-level
  outcomes. The second term can be shown to be equal to $\frac{var(Y(1) -
  Y(0))}{N}$.

  Since $\E_{\W, \Z}[\hat{\sigma}_k^2] = \frac{M_k}{N_k}\left(
  \frac{var(Y'(1))}{M_{k,t}} + \frac{var(Y'(0))}{M_{k,c}}\right)$, we must
  prove: $\frac{var(Y'(1) - Y'(0))}{N_k} \geq \frac{var(Y(1) - Y(0))}{N}$. This
  follows from an application of the Cauchy-Schwarz inequality for balanced
  clusters: $\sum_j {(\sum_i Y_i)}^2 \leq \sum_j |C_j| \sum_i Y_i^2$, where
  $C_j$ are the cluster sizes, equal to $\frac{N}{N_k}$ in the balanced case.
\end{proof}

In order to determine the greater of two clusterings, we can perform two
one-sided t-tests.  The Bayesian approach is to compute the posterior
distribution of the difference of the two estimates, using a conjugate Gaussian
prior. In order to assess the impact of assuming the two estimates are
independent Gaussians, we suggest running a sensitivity analysis, by considering
the result of the test for different values of the correlation coefficient.

\bibliographystyle{plainnat}
\bibliography{main}

\end{document}